\newtheorem{theorem}{Theorem}[section]
\newtheorem{lemma}[theorem]{Lemma}
\newtheorem{follow}[theorem]{Corollary}
\newtheorem{pr}[theorem]{Proposition}
\theoremstyle{definition}
\newcommand{\bel}{\begin{equation} \label}
\newcommand{\ee}{\end{equation}}
\newcommand{\rd}{{\mathbb R}^{2}}
\newcommand{\re}{{\mathbb R}}
\newcommand{\ze}{{\mathbb Z}}
\newcommand{\R}{{\mathbb R}}
\newcommand{\N}{{\mathbb N}}
\def\beq{\begin{equation}}
\def\eeq{\end{equation}}
\newcommand{\bea}{\begin{eqnarray}}
\newcommand{\eea}{\end{eqnarray}}
\newcommand{\beas}{\begin{eqnarray*}}
\newcommand{\eeas}{\end{eqnarray*}}
\begin{document}
\begin{center}
{\Large \bf  Discrete Spectrum of Quantum Hall Effect Hamiltonians
II. Periodic Edge Potentials}

\medskip

{\sc Pablo Miranda, Georgi Raikov}

\medskip

\today
\end{center}

\medskip

{\bf Abstract.} {\small We consider the unperturbed operator $H_0
: = (-i\nabla - {\bf A})^2 + W$, self-adjoint in $L^2(\rd)$. Here
${\bf A}$ is a magnetic potential which generates a constant
magnetic field $b>0$, and the edge potential $W = \overline{W}$ is
a ${\mathcal T}$-periodic non-constant bounded function depending
only on the first coordinate $x \in \re$ of $(x,y) \in \rd$.  Then
the spectrum $\sigma(H_0)$ of $H_0$ has a band structure, the band
functions are $b {\mathcal T}$-periodic, and  generically there
are infinitely many open gaps in $\sigma(H_0)$. We establish
explicit sufficient conditions which guarantee that a given band
of $\sigma(H_0)$ has a positive length, and all the extremal
points of the corresponding band function are non-degenerate.
Under these assumptions we consider the perturbed operators
$H_{\pm} = H_0 \pm V$ where the electric potential $V \in
L^{\infty}(\rd)$ is non-negative and decays at infinity. We
investigate the asymptotic distribution of the discrete spectrum
of $H_\pm$ in the spectral gaps of $H_0$. We introduce an
effective Hamiltonian which governs the main asymptotic term; this
Hamiltonian could be interpreted as a 1D Schr\"odinger operator
with infinite-matrix-valued potential. Further, we restrict our
attention on perturbations $V$ of compact support. We find that
there are infinitely many discrete eigenvalues in any open gap in
the spectrum of $\sigma(H_0)$, and the convergence of
these eigenvalues to the corresponding spectral edge is asymptotically Gaussian.}\\

{\bf Keywords}: magnetic Schr\"odinger operators, spectral gaps,
eigenvalue distribution\\

{\bf  2010 AMS Mathematics Subject Classification}:  35P20, 35J10,
47F05, 81Q10\\

\section{Introduction}
\label{section1} \setcounter{equation}{0}

    The general form of the unperturbed operators we will
    consider in the article, is
    $$
    H_0 = H_0(b,W) : = -\frac{\partial^2}{\partial x^2} +
    \left(-i\frac{\partial}{\partial y} - bx\right)^2 + W(x).
    $$
Here $b>0$ is the constant magnetic field, and $W = \overline{W}
\in L^{\infty}(\re)$ is an electric potential independent of $y$.
The self-adjoint operator $H_0$ is defined initially on
$C_0^{\infty}(\rd)$ and then is closed in $L^2(\rd)$.  Let
${\mathcal F}$ be the partial Fourier transform with respect to
$y$, i.e.
$$
({\mathcal F}u)(x,k) = (2\pi)^{-1/2} \int_\re e^{-iky} u(x,y) dy,
\quad u \in L^2(\rd).
$$
Then we have
$$
{\mathcal F} H_0 {\mathcal F}^* = \int_\re^{\oplus} h(k) dk
$$
where the operator
$$
h(k) : = - \frac{d^2}{dx^2} + (bx-k)^2 + W(x), \quad k \in \re,
$$
is self-adjoint in $L^2(\re)$. Since the multiplier by $x \in \re$ is relatively compact in the sense of the quadratic forms with respect to
$h(0)$, we easily find that $h(k)$ is a Kato analytic family (see e.g. \cite[Theorem XII.10]{RS4}).\\
For $w \in L^2(\re)$ and $k \in
\re$ set $(U_k w)(x) : = w(x-k/b)$. Then $U_k$ is a unitary
operator in $L^2(\re)$, and we have $U_k^* h(k) U_k =
\tilde{h}(k)$ where
$$
\tilde{h}(k) : = - \frac{d^2}{dx^2} + b^2 x^2  +  W(x + k/b),
\quad k \in \re.
$$
Evidently, for each $k \in \re$ the operator $h(k)$ (and, hence,
$\tilde{h}(k)$) has a discrete and simple spectrum. Let
$\left\{E_j(k)\right\}_{j=1}^{\infty}$ be the increasing sequence
of the eigenvalues of $h(k)$ (and, hence, of $\tilde{h}(k)$). The
general Kato analytic perturbation theory (see \cite{K} or \cite{RS4})
implies that $E_j(k)$,
$j \in {\mathbb N}$, are real analytic functions of $k \in \re$.
Since $E_j(k)$ depend on the parameters $b$ and $W$, we will
sometimes write $E_j(k;b,W)$ instead of $E_j(k)$. \\
Even though in some parts of the article we will impose more general conditions on $W$, in our main theorems we will assume that $W$ is a periodic function with
    period ${\mathcal T} > 0$, which is not identically constant.
    The explicit expression for the operator $\tilde{h}(k)$
    implies that all the functions $E_j$, $j \in {\mathbb N}$, are
    periodic functions of period $\tau : = b {\mathcal T}$.
    Set
$$
{\mathcal E}_j^- = \min_{k \in [0,\tau)}E_j(k), \quad {\mathcal E}_j^+
= \max_{k \in [0,\tau)} E_j(k).
$$
Then we have
$$
\sigma(H_0) = \bigcup_{j=1}^{\infty} [{\mathcal E}_j^-, {\mathcal
E}_j^+].
$$
We will call the closed intervals $[{\mathcal E}_j^-, {\mathcal
E}_j^+]$, $j \in {\mathbb N}$, the bands of the spectrum of
$H_0$. Note that if for some $j \in {\mathbb N}$ we have
    \bel{4}
{\mathcal E}_j^+ < {\mathcal E}_{j+1}^-,
    \ee
then the interval $({\mathcal E}_j^+, {\mathcal E}_{j+1}^-)$ is an
open  gap in the spectrum of $H_0$.\\
Further, assume that the perturbative  electric potential $V : \rd \to
\re $ is $\Delta$-compact. A simple sufficient condition which
guarantees the compactness of the operator $V (-\Delta + 1)^{-1}$,
is that $V \in L^{\infty}(\rd)$, and
    \bel{per21}
V(x,y) \to 0 \quad {\rm as} \quad x^2 + y^2 \to \infty.
        \ee
By the diamagnetic inequality, $V$ is then also a relatively
compact perturbation of $H_0$, and, hence,  we have
$$
\sigma_{\rm ess}(H_0 + V) = \sigma_{\rm ess}(H_0) =
\bigcup_{j=1}^{\infty} [{\mathcal E}_j^-, {\mathcal E}_j^+].
$$
 For simplicity, we will consider perturbations of definite
sign. More precisely we will suppose that $V \geq 0$, and will
consider the operators $H_{\pm} : = H_0 \pm V$. Note that in the
case of positive (resp., negative) perturbations, the discrete
eigenvalues of the perturbed operator which may appear in a given
open gap of the spectrum of the unperturbed operator, may
accumulate only to the lower (resp., upper) edge of the gap. \\
Let $T$ be a self-adjoint linear operator in a Hilbert space.
Denote by ${\mathbb P}_{\mathcal O}(T)$ the spectral projection of
$T$ corresponding to the Borel set ${\mathcal O} \subseteq \re$.
For $\lambda > 0$ set
$$
{\mathcal N}_0^-(\lambda) : = {\rm rank}\,{\mathbb P}_{(-\infty,
{\mathcal E}^-_1-\lambda)}(H_-).
$$
Next, fix $j \in {\mathbb N}$ and assume that \eqref{4} holds.
Pick $\lambda \in (0, {\mathcal E}^-_{j+1}-{\mathcal E}^+_j)$, and
set
$$
{\mathcal N}_j^-(\lambda) : = {\rm rank}\,{\mathbb P}_{({\mathcal
E}^+_j, {\mathcal E}^-_{j+1}-\lambda)}(H_-),
$$
$$
{\mathcal N}_j^+(\lambda) : = {\rm rank}\,{\mathbb P}_{({\mathcal
E}^+_j+\lambda, {\mathcal E}^-_{j+1})}(H_+).
$$
The aim of the article is to investigate the asymptotic behaviour
as $\lambda \downarrow 0$ of the functions ${\mathcal
N}_j^{\pm}(\lambda)$. For definiteness, we consider only the
asymptotics of  ${\mathcal N}_j^{+}(\lambda)$ while the
asymptotics of ${\mathcal N}_j^{-}(\lambda)$ could be
considered in a completely analogous manner.\\
The paper is organized as follows. In Section \ref{section2}  we
discuss some properties of the band functions $E_j$, $j \in
{\mathbb N}$, necessary for the formulation and the proofs of our
main results. In particular, we obtain explicit conditions which
guarantee that for a given $j \in {\mathbb N}$ we have ${\mathcal
E}_j^- < {\mathcal E}_j^+$, and, moreover, that all the extrema of
$E_j$ are non-degenerate. These explicit conditions could be of
independent interest for other models and problems involving
similar unperturbed
operators. \\
Section \ref{section3} contains the statements of our main
results. In Theorems \ref{pert1}, \ref{pert2}, and Corollary
\ref{perf1} we introduce several versions of the effective
Hamiltonians which are responsible for the main asymptotic term as
$\lambda \downarrow 0$ of ${\mathcal N}_j^+(\lambda)$, and
establish the corresponding asymptotic bounds. In Theorem
\ref{pert3} we consider compactly supported perturbations $V$ and
prove that if  the spectral gap ${\mathcal E}_j^+ <  {\mathcal
E}_{j+1}^-$ is open, then it contains infinitely many discrete
eigenvalues of $H_+$, and the convergence of these eigenvalues to
the edge
${\mathcal E}_j^+$ is asymptotically Gaussian provided that all the maxima of $E_j$ are non-degenerate.\\
The proofs of our main results could be found in Section
\ref{section4}.

\section{Basic spectral properties of $H_0$}
\label{section2}
\setcounter{equation}{0}
In this section we describe some spectral properties of the unperturbed operator $H_0$ needed for the formulation and the proofs of the main results.\\
First we recall a simple condition on $W$, which guarantees that
\eqref{4} holds true for all $j \in {\mathbb N}$. Note that if
$W=0$, then the eigenvalues $E_j$ are independent of $k$, and
their explicit form is well-known:
$$
E_j(k;b,0) = E_j(b,0) = b (2j-1), \quad k \in \re, \quad j \in {\mathbb N}.
$$
    Set
    $$
    W_- : = {\rm ess \, inf}_{x \in \re} W(x), \quad W_+ : = {\rm ess \, sup}_{x \in \re}
    W(x).
    $$
    By the mini-max principle,
    $$
    b(2j-1) + W_- \leq E_j(k; b, W) \leq b(2j-1) + W_+, \quad \quad k \in \re, \quad j \in {\mathbb N},
    $$
    and, hence,
$$
[{\mathcal E}_j^-, {\mathcal E}_j^+] \subseteq [b(2j-1) + W_-,
b(2j-1) + W_+], \quad j \in {\mathbb N}.
$$
Thus, a sufficient (but not
always necessary) condition which guarantees that \eqref{4} holds
for all $j \in {\mathbb N}$, is
    \bel{5a}
    W_+ - W_- < 2b.
    \ee
Fix $j \in {\mathbb N}$. The asymptotics as $\lambda \downarrow 0$
of ${\mathcal N}_j^{\pm}(\lambda)$ depends crucially on the
 set
$$
{\mathcal M}_j^{\pm} : = \left\{k \in [0,\tau) \, | \, E_j(k) =
{\mathcal E}_j^{\pm}\right\},
$$
and the behaviour of $E_j$ in a vicinity of this set. Even though
we investigate  for definiteness only the asymptotics of
${\mathcal N}_j^+$, here it is convenient to consider both
sets ${\mathcal M}_j^{\pm}$.\\
 First of all, we assume that the band
function
$E_j$  is not identically constant. Corollary \ref{perf2} below contains an explicit sufficient condition for this.\\
 Further, since the functions $E_j$ are periodic, non-constant, and real-analytic, every set $
{\mathcal M}_j^\pm$, $j \in {\mathbb N}$, is non-empty and finite,
i.e.  ${\mathcal M}_j^\pm = \left\{k_{\alpha,
j}^\pm\right\}_{\alpha=1}^{A_j^\pm}$, $A_j^\pm \in {\mathbb N}$.
    Moreover, for each $k_{\alpha, j}^\pm \in {\mathcal M}_j^\pm$ there exists $l =
l(k_{\alpha, j}^\pm) \in {\mathbb N}$ such that
$$
\frac{d^sE_j}{dk^s}(k_{\alpha, j}^\pm) = 0, \quad s=1,\dots,2l-1,
\quad \mbox{and} \quad \mp \frac{d^{2l}E_j}{dk^{2l}}(k_{\alpha,
j}^\pm)
>0.
$$
If $l(k_{\alpha, j}^\pm) = 1$ for some $k_{\alpha, j}^\pm \in
{\mathcal M}_j^\pm$, we say that $k_{\alpha, j}^\pm$ is a
non-degenerate point, and set
    \bel{per22}
    \mu_{\alpha, j}^\pm : = \mp \frac{1}{2} E_j''(k_{\alpha, j}^\pm).
    \ee

    Fix $j \in \N$. Denote by $\pi_j(k)$ is
the orthogonal projection onto ${\rm Ker}\,(h(k)-E_j(k))$.

\begin{lemma}\label{leiwa}
Let $W\in L^{\infty}(\R, \re)$. Fix  $j \in \N$. Then there exists
a real eigenfunction $\psi_j(\cdot;k) \in {\rm Ran}\,\pi_j(k) =
{\rm Ker}\,(h(k)-E_j(k))$ such that
$\|\psi_j(\cdot;k)\|_{L^2(\re)} = 1$, and the mapping
    \bel{april20}
    \R \ni k \mapsto \psi_j(\cdot;k)\in L^2(\R)
    \ee
    is analytic.
\end{lemma}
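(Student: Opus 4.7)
The plan is to combine the Kato analytic perturbation theory, already invoked in the excerpt to obtain the analyticity of $E_j(k)$, with the reality of the coefficients of $h(k)$, and to construct the desired real analytic eigenfunction family by means of the Kato transformation function.

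First, since each $h(k)$ has a simple discrete spectrum, $E_j(k)$ is always isolated in $\sigma(h(k))$. Thus for each $k_0 \in \R$ one can fix a small contour $\Gamma \subset \mathbb{C}$ enclosing only $E_j(k_0)$, and the Riesz representation
$$\pi_j(k) = -\frac{1}{2\pi i}\oint_{\Gamma}(h(k)-z)^{-1}\,dz$$
gives a rank-one projection depending norm-analytically on $k$ in a neighbourhood of $k_0$. Since the spectral projection is intrinsic, these local formulas piece together into a globally real-analytic family $k \mapsto \pi_j(k) \in {\mathcal B}(L^2(\R))$. Moreover, for real $k$ the operator $h(k)$ has real coefficients and hence commutes with the complex conjugation $C\phi := \overline{\phi}$; consequently $\pi_j(k)$ commutes with $C$, the line ${\rm Ran}\,\pi_j(k)$ is $C$-invariant, and a real unit vector $\psi_0 \in {\rm Ran}\,\pi_j(0)$ can be selected.

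To transport $\psi_0$ coherently along the whole real line, I would invoke the Kato transformation function. Setting
$$B(k) := \pi_j'(k)\pi_j(k) - \pi_j(k)\pi_j'(k),$$
which is operator-norm analytic in $k$, skew-adjoint (because $\pi_j(k)^* = \pi_j(k)$), and real (because $\pi_j(k)$ commutes with $C$), the bounded-operator linear ODE
$$\frac{dU}{dk}(k) = B(k)\,U(k), \qquad U(0) = I,$$
admits a global solution $U(k)$ on $\R$. Using the identity $\pi_j\pi_j'\pi_j = 0$, obtained by sandwiching the derivative of $\pi_j^2 = \pi_j$ with $\pi_j$ on both sides, one verifies that $U(k)$ is unitary, real, and intertwining: $\pi_j(k)\,U(k) = U(k)\,\pi_j(0)$. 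Defining $\psi_j(\cdot;k) := U(k)\psi_0$ then yields a real, unit-norm, norm-analytic eigenfunction of $h(k)$ corresponding to $E_j(k)$, which is precisely the conclusion of the lemma.

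The only substantial step is the \emph{global} coherence of the construction: locally, a complex analytic eigenfunction is immediate from $\pi_j(k)\phi_0 / \|\pi_j(k)\phi_0\|$ for any fixed $\phi_0$, but assembling the local patches into a single smoothly varying \emph{real} unit vector forces the use of the transformation function $U$. Its global existence on $\R$ rests on $\R$ being simply connected and on the local operator-norm boundedness of $\pi_j'$ provided by analyticity; no monodromy obstruction arises.
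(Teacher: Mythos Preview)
Your proof is correct and follows essentially the same route as the paper. The paper invokes \cite[Theorem XII.12]{RS4} (itself based on Kato's original construction \cite{kato2}) to obtain the analytic intertwining family $\omega_j(k)$ with $\omega_j(k)\pi_j(0)=\pi_j(k)\omega_j(k)$, and then appeals to \cite[Lemma 2.3 (v)]{iw2} to argue that, because $h(k)$ has real coefficients, $\omega_j(k)$ can be taken real and unitary for real $k$; you have simply unpacked these citations by writing down the Kato transformation function $U(k)$ explicitly as the solution of $U'=BU$, $U(0)=I$, and verifying its reality, unitarity, and intertwining property directly.
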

\begin{proof}
Our argument will follow the main lines of the proof of
\cite[Lemma 2.3 (v)]{iw2}, which on its turn is based on
\cite[Theorem XII.12]{RS4} (see also the original work
\cite{kato2}). Since the coefficients of the differential
 operator $h(k)$ are real, there exists a
real eigenfunction $\psi_j(\cdot;0) \in {\rm Ran}\,\pi_j(0)$ such
that $\|\psi_j(\cdot;0)\|_{L^2(\re)} = 1$. On the other hand,
\cite[Theorem XII.12]{RS4} implies that for $k$ in a complex
vicinity of the real axis, there exists an analytic family of invertible
bounded operators $\omega(k)$ such that
    \bel{april1}
    \omega_j(k) \pi_j(0) = \pi_j(k) \omega_j(k).
    \ee
    Moreover, for real $k$, the operators $\omega_j(k)$ can be
    chosen to be unitary. Following the argument in the proof of
    \cite[Lemma 2.3 (v)]{iw2}, we find that in our case of a
    differential operator with real coefficients, the operator
    $\omega_j(k)$ can be chosen to be real and unitary for real
    $k$. Set
    $$
    \psi_j(\cdot ; k) : = \omega_j(k) \psi_j(\cdot; 0).
    $$
    Evidently, for $k \in \re$, the function $\psi_j(\cdot ; k)$ is real,
    and $\|\psi_j(\cdot;k)\|_{L^2(\re)} = 1$, while \eqref{april1}
    implies that the mapping defined in \eqref{april20} is analytic. \end{proof}

In the sequel we will use the canonical representation
$$
\pi_j(k) = \langle\cdot,\psi_j(\cdot;k)\rangle \psi_j(\cdot;k)
$$
with an eigenfunction $\psi_j(\cdot;k)$ satisfying the properties
described in Lemma \ref{leiwa}. Put
$$
\tilde{\pi}_j(k) : = U_k \pi_j(k) U_k^* ,  \quad k \in \re, \quad j \in \N.
$$
Then we have
    \bel{per101}
\tilde{\pi}_j(k) : = \langle\cdot,\tilde{\psi}_j(\cdot;k)\rangle
\tilde{\psi}_j(\cdot;k),
    \ee
where $\tilde{\psi}_j(\cdot; k) = U_k^* \psi_j(\cdot ; k)$, or in other words,
$$
\tilde{\psi}_j(x; k) = \psi_j(x + k/b;k), \quad x \in \re, \quad k
\in \re, \quad j \in \N.
$$
Evidently, the function $\tilde{\psi}_j(\cdot; k)$ satisfies the equation
    \bel{5}
    -\frac{\partial^2\tilde{\psi}_j}{\partial
x^2}(x;k)+ b^2 x^2\tilde{\psi}_j(x;k)+W(x+k/b)\tilde{\psi}_j(x;k)
= E_j(k)\tilde{\psi}_j(x;k).
    \ee
     Moreover, $\|\tilde{\psi}_j(\cdot; k)\|_{L^2(\re)} = 1$.
\begin{pr} \label{p31}
Let $W = \overline{W} \in C^2(\re) \cap L^{\infty}(\re)$ with $W',
W'' \in L^{\infty}(\re)$. Suppose that $W'(x_0) > 0$ (resp.,
$W'(x_0) < 0$) for some $x_0 \in \re$. Pick $j \in {\mathbb N}$. Then
there exists $b_0 = b_0(W,j)$ such that $b > b_0$ implies
$E_j'(bx_0;b,W) > 0$ (resp., $E_j'(bx_0; b, W) < 0$).
\end{pr}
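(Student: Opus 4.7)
The natural approach is to combine the Feynman--Hellmann formula with a semiclassical rescaling that isolates the leading $b\to\infty$ behaviour. Since $\tilde{h}(k)$ depends analytically on $k$, the eigenvalue $E_j(k)$ is simple, and by Lemma \ref{leiwa} we have a real analytic normalized eigenfunction $\tilde{\psi}_j(\cdot;k)$ satisfying \eqref{5}. Differentiating \eqref{5} in $k$, taking the inner product with $\tilde{\psi}_j(\cdot;k)$, and using the self-adjointness of $\tilde{h}(k)$ together with $\|\tilde{\psi}_j(\cdot;k)\|=1$, one obtains the Feynman--Hellmann identity
$$
E_j'(k) \;=\; \frac{1}{b}\int_\re W'\!\left(x+\tfrac{k}{b}\right)\bigl|\tilde{\psi}_j(x;k)\bigr|^2 dx.
$$
Specializing to $k = bx_0$ gives
$$
b\,E_j'(bx_0;b,W) \;=\; \int_\re W'(x+x_0)\,\bigl|\tilde{\psi}_j(x;bx_0)\bigr|^2\,dx,
$$
so the whole question reduces to computing the limit of this integral as $b\to\infty$.

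Next I would perform the harmonic-oscillator rescaling $y = b^{1/2} x$, setting $\varphi_{j,b}(y) := b^{-1/4}\tilde{\psi}_j(b^{-1/2}y;bx_0)$. Then $\|\varphi_{j,b}\|_{L^2(\re)}=1$, and \eqref{5} becomes
$$
-\varphi_{j,b}''(y) + y^2 \varphi_{j,b}(y) + \frac{1}{b}\,W\!\left(\tfrac{y}{\sqrt{b}}+x_0\right)\varphi_{j,b}(y) \;=\; \frac{E_j(bx_0)}{b}\,\varphi_{j,b}(y).
$$
Since $\|W\|_{L^\infty}<\infty$, the perturbation $\tfrac{1}{b}W(y/\sqrt{b}+x_0)$ converges to $0$ in operator norm, so the rescaled Hamiltonian converges in norm to the standard harmonic oscillator $H_\infty := -d^2/dy^2+y^2$, whose $j$-th eigenvalue $2j-1$ is simple. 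Therefore $E_j(bx_0)/b \to 2j-1$ and the spectral projection on the $j$-th eigenspace converges in operator norm, which yields $\varphi_{j,b}\to \varphi_j^0$ in $L^2(\re)$ (with an appropriate, and irrelevant, sign), where $\varphi_j^0$ is the $j$-th normalized Hermite function for $H_\infty$.

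Changing variables back in the Feynman--Hellmann integral gives
$$
b\,E_j'(bx_0;b,W) \;=\; \int_\re W'\!\left(\tfrac{y}{\sqrt{b}}+x_0\right)\bigl|\varphi_{j,b}(y)\bigr|^2\,dy.
$$
Splitting the integrand as $W'(\cdot)\bigl(|\varphi_{j,b}|^2-|\varphi_j^0|^2\bigr) + W'(\cdot)|\varphi_j^0|^2$, the first piece is bounded by $\|W'\|_{L^\infty}\,\bigl\||\varphi_{j,b}|^2-|\varphi_j^0|^2\bigr\|_{L^1}$, which tends to $0$ by Cauchy--Schwarz from the $L^2$-convergence of $\varphi_{j,b}$; the second piece converges to $W'(x_0)\int|\varphi_j^0|^2\,dy = W'(x_0)$ by dominated convergence, using the continuity of $W'$ (since $W\in C^2$) and $\|W'\|_{L^\infty}<\infty$. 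Hence $b\,E_j'(bx_0;b,W)\to W'(x_0)>0$, which forces $E_j'(bx_0;b,W)>0$ for all $b$ larger than some $b_0=b_0(W,j)$; the case $W'(x_0)<0$ is identical. The only delicate point is the justification that $\varphi_{j,b}\to \varphi_j^0$ in $L^2$, which is a standard consequence of norm resolvent convergence for a simple eigenvalue; the hypotheses $W',W''\in L^\infty$ play no role beyond ensuring $W\in C^2$ so that the Feynman--Hellmann identity makes sense and $W'$ is continuous at $x_0$.
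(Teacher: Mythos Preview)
Your proof is correct. Both you and the paper start from the same Feynman--Hellmann identity and both establish that $b\,E_j'(bx_0;b,W)\to W'(x_0)$ as $b\to\infty$, but the routes differ in how this limit is obtained.

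The paper argues quantitatively: it writes $b\,E_j'(k) = {\rm Tr}\,(W'(\cdot+k/b)\tilde{\pi}_j(k))$, expresses $\tilde{\pi}_j(k)$ as a Riesz contour integral, and uses the resolvent identity to split off the harmonic-oscillator projection. This yields an explicit estimate $|b\,E_j'(k) - W'(k/b)| \leq c_1 b^{-1} + c_2 b^{-1/2}$ with constants $c_1,c_2$ depending on $\|W\|_\infty$, $\|W'\|_\infty$, $\|W''\|_\infty$ and $j$, and hence an explicit threshold $b_0$. Your argument is softer: you rescale to a bounded perturbation of the unit harmonic oscillator, invoke norm resolvent convergence to get $L^2$-convergence of the (squared) eigenfunctions, and finish with dominated convergence. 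This is shorter and more conceptual, and, as you observe, it uses only continuity of $W'$ at $x_0$ together with $W,W'\in L^\infty$; the hypothesis $W''\in L^\infty$ is not actually needed in your version. What you give up is the explicit rate and the explicit $b_0$, which the paper exploits in the subsequent remark to track how $b_0$ depends on $j$.
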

\begin{proof}
By the Feynman-Hellmann formula we have
    \bel{per1}
E'_j(k;b,W) = \frac{1}{b} \int_\re W'(x+k/b) \tilde{\psi}_j(x;k)^2 dx.
    \ee
Pick $b> 2\|W\|_{L^{\infty}}$ and denote by $\Gamma_j$ the
circle  of radius $b$, centered at $b(2j-1)$. Denote by
$\tilde{h}(b,0)$ the harmonic oscillator $-\frac{d^2}{dx^2} +
b^2x^2$. Then the interior of $\Gamma_j$ contains the eigenvalue
$E_j(k;b,W)$ (resp., $b(2j-1)$) of the operator $\tilde{h}(k;b,W)$
(resp., of $\tilde{h}(b,0)$), while the rest of the spectra of
these operators lie in the exterior of $\Gamma_j$. Then, evidently, \eqref{per1} implies
$$
bE'_j(k;b,W) = {\rm Tr} \left(W'(\cdot +k/b) \tilde{\pi}_j(k)\right) =
$$
$$
\frac{1}{2\pi i} {\rm Tr} \left(\int_{\Gamma_j}W'(\cdot +k/b)
(\tilde{h}(k;b,W) - \omega)^{-1} d\omega\right) =
$$
$$
\frac{1}{2\pi i} {\rm Tr} \left(\int_{\Gamma_j}W'(\cdot +k/b)
(\tilde{h}(b,0) - \omega)^{-1} d\omega\right) -
$$
    \bel{20}
\frac{1}{2\pi i} {\rm Tr} \left(\int_{\Gamma_j}W'(\cdot +k/b)
(\tilde{h}(b,0) - \omega)^{-1} W(\cdot +k/b) (\tilde{h}(k;b,W) -
\omega)^{-1} d\omega\right),
    \ee
the contour $\Gamma_j$ being run over in clockwise direction.
Further, we have
$$
\frac{1}{2\pi i} {\rm Tr} \left(\int_{\Gamma_j}W'(\cdot +k/b)
(\tilde{h}(b,0) - \omega)^{-1} d\omega\right) =
$$
$$
b^{1/2} \int_\re W'(x+k/b) \varphi_j(b^{1/2}x)^2 dx = \int_\re
W'(b^{-1/2}y+b^{-1}k) \varphi_j(y)^2 dy =
$$
    \bel{21}
    W'(b^{-1}k) + \int_\re
(W'(b^{-1/2}y+b^{-1}k)-W'(b^{-1}k)) \varphi_j(y)^2 dy,
    \ee
    where $\varphi_j = \overline{\varphi}_j$ satisfies
    $$
    -\varphi_j''(x) + x^2 \varphi_j(x) = (2j-1) \varphi_j(x), \quad \|\varphi_j\|_{L^2(\re)} = 1.
    $$
    It is well-known that
    $$
    \varphi_j(x) = {\rm H}_{j-1}(x) e^{-x^2/2}, \quad x \in \re, \quad j \in {\mathbb N},
    $$
    where ${\rm H}_l$, $l \in \ze_+$, are appropriately normalized Hermite polynomials.
    Combining \eqref{20} and \eqref{21}, we get
    \bel{22}
E'_j(k;b,W) - \frac{1}{b} W'(b^{-1}k) = \frac{1}{b}(K_1 + K_2)
    \ee
    with
    $$
    K_1 : = - \frac{1}{2\pi i} {\rm Tr} \left(\int_{\Gamma_j}W'(\cdot +k/b)
(\tilde{h}(b,0) - \omega)^{-1} W(\cdot +k/b) (\tilde{h}(k;b,W) -
\omega)^{-1} d\omega\right),
$$
$$
K_2 : = \int_\re (W'(b^{-1/2}y+b^{-1}k)-W'(b^{-1}k))
\varphi_j(y)^2 dy.
$$
It is easy to check that we have
    \bel{23}
    |K_1| \leq c_1 b^{-1}, \quad |K_2| \leq c_2 b^{-1/2},
    \ee
    with
    \bel{per3}
    c_1 : =  \|W\|_{L^{\infty}(\re)}
    \|W'\|_{L^{\infty}(\re)}\left(\sum_{l=1}^{\infty}(2|l-j|-1)^{-2}\right)^{1/2}
    \left(\sum_{l \in {\mathbb N} : l\neq j}(2|l-j|-3/2)^{-2} +
    4\right)^{1/2},
    \ee
   \bel{per4}
    c_2 : = \|W''\|_{L^{\infty}(\re)} \int_\re |y| \varphi_j(y)^2
    dy.
    \ee
    Putting together \eqref{22} and \eqref{23}, we get
    \bel{per2}
    \left| E'_j(k;b,W) - \frac{1}{b} W'(b^{-1}k) \right| \leq c_1
    b^{-2}+c_2 b^{-3/2}.
    \ee
    Bearing in mind that by hypothesis $W'(x_0) > 0$ (resp., $W'(x_0) < 0$), we
    find that if
    \bel{per24}
    b > b_0 : = \max\left\{2\|W\|_{L^{\infty}(\re)}, \left(\frac{c_2 + \sqrt{c_2^2 + 4c_1
    |W'(x_0)|}}{2|W'(x_0)|}\right)^2\right\},
    \ee
then $E_j'(b x_0) > 0$ (resp., $E_j'(b x_0) < 0$).
\end{proof}
Proposition \ref{p31} implies immediately the following
\begin{follow} \label{perf2}
Assume that $W$ satisfies the assumption of Proposition \ref{p31}.
Then for each $j \in {\mathbb N}$ there exists $b_0 = b_0(j, W) > 0$ such that  $b > b_0$ implies that
    \bel{per23x}
\inf_{k \in \re} E_j(k; b, W) < \sup_{k \in \re} E_j(k; b, W).
    \ee
\end{follow}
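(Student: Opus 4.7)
The plan is to deduce this corollary as an essentially immediate consequence of Proposition \ref{p31}. By hypothesis there exists a point $x_0 \in \re$ with $W'(x_0) \neq 0$; replacing $W$ with $-W$ (or just choosing the appropriate case of the proposition) we may assume $W'(x_0) > 0$. Proposition \ref{p31} then provides $b_0 = b_0(W,j) > 0$ such that for every $b > b_0$,
\[
E_j'(b x_0; b, W) > 0.
\]
In particular, the real-analytic function $k \mapsto E_j(k; b, W)$ has a point at which its derivative is strictly positive, and hence it cannot be identically constant on $\re$.

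Since $E_j(\cdot; b, W)$ is continuous and $\tau$-periodic with $\tau = b\mathcal{T}$, non-constancy is equivalent to
\[
\inf_{k \in \re} E_j(k;b,W) = \min_{k \in [0,\tau)} E_j(k;b,W) < \max_{k \in [0,\tau)} E_j(k;b,W) = \sup_{k \in \re} E_j(k;b,W),
\]
which is exactly \eqref{per23x}. This finishes the argument.

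There is no genuine obstacle here: all the work has already been done in Proposition \ref{p31}, whose quantitative bound \eqref{per2} on $E_j'(k;b,W) - b^{-1} W'(b^{-1}k)$ guarantees that the sign of $E_j'$ at $k = b x_0$ is dictated by the sign of $W'(x_0)$ once $b$ is large enough (with an explicit $b_0$ given by \eqref{per24}). The only thing the corollary adds is the trivial observation that a single point where the derivative is nonzero rules out constancy, which for a continuous periodic function is equivalent to a strict inequality between its infimum and supremum. One could, if desired, record that the same $b_0$ from \eqref{per24} works verbatim.
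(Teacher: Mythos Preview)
Your argument is correct and matches the paper's own treatment: the paper simply states that Corollary \ref{perf2} follows immediately from Proposition \ref{p31}, and you have spelled out precisely that immediate deduction. One cosmetic remark: the reduction ``replacing $W$ with $-W$'' is not the right move (it changes the operator), but as you note parenthetically, Proposition \ref{p31} already covers both signs of $W'(x_0)$, so no reduction is needed; also, the appeal to $\tau$-periodicity is not required for the bare inequality $\inf E_j < \sup E_j$, since $E_j'(bx_0)\neq 0$ alone forces non-constancy.
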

{\em Remark}: The absolute continuity of the spectrum of the
operator $H_0$ is equivalent to the validity of \eqref{per23x} for
any $j \in {\mathbb N}$. Unfortunately, the constant $c_2$ in
\eqref{per4}, and hence $b_0$ in \eqref{per24} grow unboundedly as
$j \to \infty$ so that Corollary \ref{perf2} only implies that for
any $a \in \re$ there exists $\tilde{b}_0 = \tilde{b}_0(a, W)$
such that the absolute continuity of the spectrum of the operator $H_0(b,W)$ on the interval $(-\infty, a)$ follows from $b > \tilde{b}_0$. \\
Many authors have conjectured the  absolute continuity of the
spectrum of the Landau Hamiltonian $H_0(b,0)$ perturbed by generic
periodic potentials $W : \rd \to \re$ such that the flux of the
magnetic field through the unit cell of the lattice of the periods
of $W$ is $2\pi$-rational (note, however, that this  is evidently
false for constant $W$). This conjecture was  proved only recently
by F. Klopp for a $G_\delta$-dense set of potentials $W$ which
satisfy the rational-flux condition (see \cite{Kl}). If $W$
depends only on $x$,  and is periodic, then it always satisfies
the rational-flux condition. Nevertheless, even in this simpler
situation, there is no general proof of the absolute continuity of
$\sigma(H_0(b, W))$ for non-constant periodic $W$. In
\cite[Theorem 4.0.4, Corollary 4.0.5]{Be}  the absolute continuity
of $\sigma(H_0(b, W))$ is proven under an explicit condition on
the Fourier coefficients of $W$, and a smallness assumption on
$\|W\|_{L^{\infty}(\re)}$; see also the related results in
\cite{dss}. One of the difficulties in the proof of the absolute
continuity of $\sigma(H_0(b, W))$ for general non-constant
periodic $W : \re \to \re$, is related to the fact that we have
    \bel{per30}
\lim_{j \to \infty}\left({\mathcal E}_j^\pm - b(2j-1) - \langle
W\rangle\right) = 0
    \ee
where $\langle W\rangle$ is the mean value of $W$ (see
\cite{KKP}); in particular, $\lim_{j \to \infty}\left({\mathcal
E}_j^+ - {\mathcal E}_j^-\right) = 0$. On the other hand,
\eqref{per30} implies as a by-product that for $j \in {\mathbb N}$
large enough,
inequality \eqref{4} is valid even if \eqref{5a} does not hold true.\\

\begin{pr} \label{p32}
Let $W = \overline{W} \in C^3(\re) \cap L^{\infty}(\re)$ with $W',
W'', W''' \in L^{\infty}(\re)$. Suppose that $W''(x_0) > 0$
(resp., $W''(x_0) < 0$) for some $x_0 \in \re$. Pick $j \in {\mathbb N}$.
Then there exists $b_1 = b_1(W,j)$ such that $b > b_1$ implies
$E_j''(bx_0;b,W) > 0$ (resp., $E_j''(bx_0; b, W) < 0$).
\end{pr}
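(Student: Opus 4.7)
The plan is to parallel the proof of Proposition~\ref{p31}, starting from a second-order Rayleigh--Schr\"odinger formula for $E_j''$ and then reducing each piece to the free harmonic oscillator by the same resolvent expansion used in \eqref{20}--\eqref{per2}. Lemma~\ref{leiwa} guarantees an analytic choice of normalized \emph{real} eigenfunctions $\tilde\psi_j(\cdot;k)$, so differentiating equation \eqref{5} in $k$, expanding $\partial_k\tilde\psi_j$ in the orthonormal basis $\{\tilde\psi_m(\cdot;k)\}_{m=1}^\infty$ and using the Feynman--Hellmann identity \eqref{per1} as in the proof of Proposition~\ref{p31} yields
\[
E_j''(k;b,W) = \frac{1}{b^2}\,\langle \tilde\psi_j, W''(\cdot + k/b)\tilde\psi_j\rangle + \frac{2}{b^2}\sum_{m \neq j}\frac{|\langle \tilde\psi_m, W'(\cdot + k/b)\tilde\psi_j\rangle|^2}{E_j(k) - E_m(k)}.
\]
The goal is then to show that at $k = bx_0$ the right-hand side equals $\frac{1}{b^2}W''(x_0) + o(b^{-2})$, so its sign is dictated by $W''(x_0)$ for $b$ large.

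For the diagonal term I would replay the argument of Proposition~\ref{p31} \emph{verbatim}, with $W'$ replaced by $W''$ throughout. The Taylor-expansion step \eqref{21} now requires $W''' \in L^{\infty}(\re)$, which is available by hypothesis and plays the role of $W''$ in \eqref{per4}. The conclusion, analogous to \eqref{per2}, reads
\[
\Bigl| \langle \tilde\psi_j, W''(\cdot + k/b)\tilde\psi_j\rangle - W''(k/b) \Bigr| \leq c_1' b^{-1} + c_2' b^{-1/2},
\]
with constants $c_1', c_2'$ involving $\|W\|_{L^{\infty}(\re)}, \|W''\|_{L^{\infty}(\re)}$ and $\|W'''\|_{L^{\infty}(\re)}$. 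Thus the diagonal term equals $\frac{1}{b^2}W''(k/b) + O(b^{-5/2})$ uniformly in $k \in \re$.

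For the off-diagonal sum I would use the crude mini-max bound $|E_j(k) - E_m(k)| \geq 2b|m-j| - 2\|W\|_{L^{\infty}(\re)} \geq b$ valid for every $m \neq j$ as soon as $b \geq 2\|W\|_{L^{\infty}(\re)}$, combined with the Bessel inequality $\sum_{m \neq j} |\langle \tilde\psi_m, W'(\cdot + k/b)\tilde\psi_j\rangle|^2 \leq \|W'\|_{L^{\infty}(\re)}^2$. Together these give an $O(b^{-3})$ bound for the off-diagonal contribution. Setting $k = bx_0$ in the identity for $E_j''$, the dominant contribution is $\frac{1}{b^2}W''(x_0)$ with total error $O(b^{-5/2})$, and the conclusion follows by choosing $b$ so large that this error is dominated by $\frac{1}{b^2}|W''(x_0)|$. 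The main obstacle is purely bookkeeping: one must verify that the $O(b^{-3})$ off-diagonal remainder and the $O(b^{-5/2})$ diagonal remainder are both strictly smaller than the main term of order $b^{-2}$, and that the constants arising in the modified Proposition~\ref{p31}-type estimate remain finite and summable once $W'$ is replaced by $W''$ and $W''$ by $W'''$ --- which is precisely why the extra regularity hypothesis $W''' \in L^{\infty}(\re)$ has been imposed.
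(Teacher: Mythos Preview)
Your proposal is correct and is essentially the same argument as the paper's. The only cosmetic difference is that you write the cross term in Rayleigh--Schr\"odinger form as a sum over $m\neq j$, whereas the paper keeps it as $\frac{2}{b}\langle W'(\cdot+k/b)\,\partial_k\tilde\psi_j,\tilde\psi_j\rangle$ and then inserts the reduced-resolvent identity $\partial_k\tilde\psi_j=-b^{-1}(\tilde h(k)-E_j(k))^{-1}(I-\tilde\pi_j(k))W'(\cdot+k/b)\tilde\psi_j$; these are the same expression, and your Bessel-plus-gap bound reproduces exactly the paper's estimate $\|\partial_k\tilde\psi_j\|\le b^{-2}\|W'\|_{L^\infty}$ and the resulting $O(b^{-3})$ for the cross term.
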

\begin{proof}
First of all, note that
    $$
    \frac{\partial \tilde{\psi}_j}{\partial k}(x;k) = \frac{{\partial \psi}_j}{\partial k}(x +
    k/b;k)+ \frac{1}{b} \frac{{\partial \psi}_j}{\partial x}(x + k/b;k).
    $$
    Applying Lemma \ref{leiwa}, we conclude that $\frac{\partial \tilde{\psi}_j}{\partial k}(\cdot;k) \in
    L^2(\re)$. Calculating the derivative with respect to $k$ in \eqref{per1}, we get
\bel{per6}
E'_j(k;b,W) = \frac{1}{b^2} \int_\re W''(x+k/b) \tilde{\psi}_j(x;k)^2 dx + \frac{2}{b} \int_\re W'(x+k/b) \frac{\partial \tilde{\psi}_j}{\partial k}(x;k) \tilde{\psi}_j(x;k) dx.
    \ee
    As in the proof of \eqref{per2}, we suppose that $b > 2\|W\|_{L^{\infty}(\re)}$, and find that
    \bel{per7}
    \left|\int_\re W''(x+k/b) \tilde{\psi}_j(x;k)^2 dx - W''(k/b)\right| \leq c_3 b^{-1} + c_4 {b^{-1/2}}
    \ee
    where the constants $c_3$ and $c_4$ are defined by analogy $c_1$ and $c_2$, replacing
    $W'$ by $W''$ in \eqref{per3}, and $W''$ by $W'''$ in
    \eqref{per4}. Further, obviously,
    \bel{per8}
    \left| \int_\re W'(x+k/b) \frac{\partial \tilde{\psi}_j}{\partial k}(x;k) \tilde{\psi}_j(x;k) dx \right| \leq
    \|W'\|_{L^{\infty}(\re)} \left\| \frac{\partial \tilde{\psi}_j}{\partial k}(\cdot;k)\right\|_{L^2(\re)}.
    \ee
    Since the functions $\frac{\partial \tilde{\psi}_j}{\partial k}(\cdot;k)$ and $\tilde{\psi}_j(\cdot;k)$
    are orthogonal in $L^2(\re)$, we find that
    $$
    \frac{\partial \tilde{\psi}_j}{\partial k}(\cdot;k) = (I - \tilde{\pi}_j(k))
    \frac{\partial \tilde{\psi}_j}{\partial k}(\cdot;k),
    $$
    the orthogonal projection $\tilde{\pi}_j(k)$ being defined in  \eqref{per101}.
    Deriving equation \eqref{5} with respect to $k$, we easily obtain
    \bel{per9}
    \frac{\partial \tilde{\psi}_j}{\partial k}(\cdot;k) =
    - \frac{1}{b} (\tilde{h}(k) - E_j(k))^{-1} (I - \tilde{\pi}_j(k))  W'(\cdot + k/b)
    \tilde{\psi}_j(\cdot;k),
    \ee
    and, hence,
    \bel{per10}
    \left\| \frac{\partial \tilde{\psi}_j}{\partial
    k}(\cdot;k)\right\|_{L^2(\re)} \leq \frac{1}{b^2}
    \|W'\|_{L^{\infty}(\re)}.
    \ee
    Putting together \eqref{per6}, \eqref{per7}, \eqref{per8}, and \eqref{per10}, we obtain
    $$
    \left| E''_j(k;b,W) - \frac{1}{b^2} W''(b^{-1}k) \right| \leq
    c_5
    b^{-3}+c_4 b^{-5/2}
    $$
    with $c_5 : = c_3 + 2 \|W'\|^2_{L^{\infty}(\re)}$.
    Therefore  $W''(x_0) > 0$ (resp., $W''(x_0) < 0$), implies $E_j''(b x_0) > 0$ (resp., $E_j''(b x_0) <
    0$),
    provided that
    $$
    b > b_1 : = \max\left\{2\|W\|_{L^{\infty}(\re)}, \left(\frac{c_4 + \sqrt{c_4^2 +
    4c_5
    |W''(x_0)|}}{2|W''(x_0)|}\right)^2\right\}.
    $$
\end{proof}

{\em Remark}: Propositions \ref{p31} -- \ref{p32} show that for large magnetic fields $b$ the band functions
$E_j$, $j \in {\mathbb N}$, behave quite similarly to the edge potential $W$. This behaviour could be considered as semiclassical. \\

The combination of Propositions \ref{p31} -- \ref{p32} easily
yields the following
    \begin{follow} \label{f23}
    Let $W = \overline{W} \in C^3(\re)$ be a ${\mathcal T}$-periodic function
    such that $W'(x) = 0$, $x \in \re$, implies $W''(x) \neq 0$.
    Assume that the sets
    ${\mathcal M}_W^\pm : = \left\{x \in [0,{\mathcal T}) \, | \, W(x)
= W_\pm\right\}$
    consist of $A_W^\pm \in {\mathbb N}$ points. Then for each $j \in {\mathbb
    N}$ there exists $b_2(j, W) > 0$ such that $b > b_2$ implies
    that the set ${\mathcal M}_j^\pm$ contains exactly $A_W^\pm$ points, and all of them are non-degenerate.
    \end{follow}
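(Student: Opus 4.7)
The strategy is to lift the non-degeneracy hypothesis on $W$ to the band function $E_j(\cdot;b,W)$ via the uniform-in-$k$ bound \eqref{per2}, together with the analogous estimate $|E''_j(k;b,W) - b^{-2}W''(k/b)| \leq c_5 b^{-3} + c_4 b^{-5/2}$ obtained in the proof of Proposition \ref{p32}. Since $W'(x) = 0$ implies $W''(x) \neq 0$, every zero of $W'$ is isolated, so the $\mathcal{T}$-periodic function $W$ has finitely many critical points $x_1, \dots, x_N$ in $[0, \mathcal{T})$, each non-degenerate; the sets $\mathcal{M}_W^\pm$ are subsets of $\{x_1,\dots,x_N\}$.

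I would first localize and classify the critical points of $E_j$. Choose $\varepsilon > 0$ so small that the intervals $I_i := (x_i-\varepsilon, x_i+\varepsilon)$ are pairwise disjoint modulo $\mathcal{T}$ and $W''$ keeps constant sign on each $I_i$. Since $|W'|$ is bounded below by some $\delta > 0$ on the complement (mod $\mathcal{T}$) of $\bigcup_i I_i$, the first bound rules out zeros of $E'_j(\cdot;b,W)$ outside $b\bigcup_i I_i$ (mod $b\mathcal{T}$) for $b$ large. On each lifted interval $bI_i$ the second bound forces $\mathrm{sgn}\,E''_j(k;b,W) = \mathrm{sgn}\,W''(x_i)$, so $E'_j$ is strictly monotonic there; it takes values of opposite signs at the endpoints (by the first bound and the sign change of $W'$ across $x_i$), hence vanishes at exactly one point $k_i$. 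This $k_i$ is a non-degenerate critical point of $E_j$ of the same type (local max or min) as $x_i$.

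It then remains to match the global extrema. Integrating \eqref{per2} from $k_{i'}$ to $k_i$ and using $|k_i - k_{i'}| \leq b\mathcal{T}$ yields
\[
E_j(k_i;b,W) - E_j(k_{i'};b,W) = W(k_i/b) - W(k_{i'}/b) + O(b^{-1/2}).
\]
From $E'_j(k_i;b,W) = 0$ and $W''(x_i) \neq 0$, a Taylor expansion of $W'$ gives $|k_i/b - x_i| = O(b^{-1/2})$, hence $W(k_i/b) = W(x_i) + O(b^{-1})$. Substituting, $E_j(k_i) - E_j(k_{i'}) = W(x_i) - W(x_{i'}) + O(b^{-1/2})$. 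The minimum positive gap $m := \min\{|W(x_i) - W(x_{i'})| : W(x_i) \neq W(x_{i'})\}$ is a positive constant, so for $b$ larger than some $b_2(j,W)$ the ordering of the critical values of $E_j$ matches that of the $W(x_i)$; in particular $\mathcal{M}_j^+$ is precisely the image of $\mathcal{M}_W^+$ under $x_i \mapsto k_i$, giving $|\mathcal{M}_j^+| = A_W^+$ with all points non-degenerate, and similarly for $\mathcal{M}_j^-$. The main obstacle lies exactly in this last step: one must propagate the pointwise derivative bounds to a comparison of function values over an interval of length $\sim b$, so the exponent $-3/2$ in \eqref{per2} is essential—an $O(b^{-1})$ error would integrate to $O(1)$ and would not separate global from non-global critical values of $W$.
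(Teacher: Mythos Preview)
Your approach is exactly what the paper intends: it offers no proof beyond ``the combination of Propositions~\ref{p31}--\ref{p32} easily yields the following,'' and you have correctly extracted from those propositions the two uniform-in-$k$ estimates \eqref{per2} and $|E''_j(k)-b^{-2}W''(k/b)|\le c_5 b^{-3}+c_4 b^{-5/2}$ and used them to localize and classify the critical points of $E_j$. Your integration argument for comparing critical \emph{values} across a period of length $b\mathcal{T}$, and your remark that the exponent $-3/2$ in \eqref{per2} is exactly what makes this work, are details the paper omits entirely.

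There is, however, a gap in the final step when $A_W^+>1$ (and likewise for $A_W^-$). Your estimate $E_j(k_i)-E_j(k_{i'})=W(x_i)-W(x_{i'})+O(b^{-1/2})$ transfers \emph{strict} inequalities among the $W(x_i)$ to the $E_j(k_i)$, so you correctly obtain $\mathcal{M}_j^+\subseteq\{k_i:x_i\in\mathcal{M}_W^+\}$ and hence $\#\mathcal{M}_j^+\le A_W^+$. But to conclude equality when $A_W^+\ge 2$ you would need the $E_j(k_i)$ for $x_i\in\mathcal{M}_W^+$ to be \emph{equal}, whereas your bound only says they differ by $O(b^{-1/2})$. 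Absent a symmetry of $W$ exchanging its global maxima, there is no reason to expect this: higher-order corrections to $E_j(k_i)$ depend on local data such as $W''(x_i)$, which may differ among the $x_i\in\mathcal{M}_W^+$. So your argument establishes the corollary in full only when $A_W^\pm=1$; for $A_W^\pm>1$ it yields $1\le\#\mathcal{M}_j^\pm\le A_W^\pm$ with all points non-degenerate. This appears to be a lacuna in the corollary as stated rather than a defect of your method---the paper's one-line justification does not address it either---and the only application in the paper (see the remark after Theorem~\ref{pert3}) is to the case $A_j^+=1$.
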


\section{Main Results}
\label{section3} \setcounter{equation}{0}
    \subsection{Notations. Auxiliary results}
    \label{ss31}
    This subsection contains notations used for the statement of our main theorems,
    and related auxiliary results needed for their proofs. \\
    Let $X_l$, $l=1,2$, be two separable
    Hilbert spaces. By ${\mathcal L}(X_1, X_2)$ (resp., $S_\infty(X_1, X_2)$) we denote the class of
    bounded (resp., compact) linear operators  $T: X_1 \to X_2$, and by $S_p(X_1,
    X_2)$, $p \in [1,\infty)$, the Schatten-von Neumann class of
    operators $T \in S_\infty(X_1, X_2)$ for which $\|T\|_p : =
    \left({\rm Tr}\,(T^* T)^{p/2}\right)^{1/p} < \infty$. If $X_1 = X_2
    = X$, we will write ${\mathcal L}(X)$ and $S_p(X)$ instead of ${\mathcal L}(X, X)$ and $S_p(X,X)$, $p \in
    [1,\infty]$, respectively.
    Let $T = T^* \in S_\infty(X)$. For $s>0$ set
    $$
    n_{\pm}(s; T) = {\rm rank}\,{\mathbb P}_{(s,\infty)}(\pm T);
    $$
    thus $ n_{\pm}(\cdot; T)$ are the counting functions respectively of the
    positive and the negative eigenvalues of $T$.
    Let $T \in S_\infty(X_1, X_2)$. Put
    $$
    n_{*}(s; T) = n_{+}(s^2; T^*T), \quad s > 0;
    $$
    thus $n_{*}(\cdot ; T)$ is the counting function of the singular
    numbers of  $T$.
    We have
    $$
    n_*(s; T) = n_*(s; T^*), \quad s>0.
    $$
    Moreover, if $X_1 = X_2 = X$, and $T=T^*$, we have
    $$
    n_\pm(s; T) \leq n_*(s; T), \quad s>0.
    $$
    Note that the functions $n_\pm$ satisfy Weyl
    inequalities
    \bel{per31}
    n_+(s(1+\varepsilon); T_1) - n_-(s \varepsilon; T_2) \leq
    n_+(s; T_1 + T_2) \leq  n_+(s(1-\varepsilon); T_1) + n_+(s \varepsilon;
    T_2),
    \ee
    with $s>0$ and $\varepsilon \in (0,1)$, while the function $n_*$ satisfies the Ky Fan
    inequalities
    \bel{per32}
    n_*(s(1+\varepsilon); T_1) - n_*(s \varepsilon; T_2) \leq
    n_*(s; T_1 + T_2) \leq  n_*(s(1-\varepsilon); T_1) + n_*(s \varepsilon;
    T_2),
    \ee
    with $s>0$ and $\varepsilon \in (0,1)$. Finally, for each $s>0$ and $p \in [1,\infty)$ we
    have
     \bel{per33}
     n_*(s; T) \leq s^{-p} \|T\|_p^p.
     \ee
    \subsection{Effective Hamiltonians}
    \label{ss32}
    In this subsection, we introduce the effective Hamiltonians
    which under suitable assumptions on $W$ and $V$ govern the
    main asymptotic term as $\lambda \downarrow 0$ of ${\mathcal
    N}_j^+(\lambda)$, and establish the corresponding asymptotic
    bounds.\\
    In what follows, we assume that $V : \rd \to \re$ is Lebesgue
    measurable, and satisfies the estimates
    \bel{per20}
    0 \leq V(x,y) \leq C_0 (1 + |x|)^{-m_1} (1 + |y|)^{-m_2}, \quad
    (x,y) \in \rd,
    \ee
    with some $C_0 \in [0, \infty)$, and $m_l \in (0,\infty)$,
    $l=1,2$. In particular, \eqref{per20} implies that \eqref{per21}
    holds true. Fix $j \in {\mathbb N}$.  Note  that
    \bel{per100}
     \psi_j(x; l\tau + k) =  \psi_j(x - l{\mathcal T}; k), \quad x \in \re, \quad l \in
     {\mathbb Z}, \quad k \in \re,
     \ee
     the eigenfunction $\psi_j(\cdot; k)$ being introduced in Lemma \ref{leiwa}. \\
     Put $A_j^+ : = \# {\mathcal M}_j^+$, ${\mathcal S}_j : =
     \{1,\ldots, A_j^+\}$. Assume that the set ${\mathcal M}_j^+ =
     \left\{k_{\alpha, j}^+\right\}_{\alpha \in {\mathcal S}_j}$
     contains only non-degenerate points $k_{\alpha, j}^+$.
     For $\lambda > 0$ define  ${\mathcal
     G}_1(\lambda) :   l^2({\mathbb Z} \times {\mathcal
     S}_j) \otimes L^2(\re) \to L^2(\rd)$ as the operator with
     integral kernel
     $$
     (2\pi)^{-1/2} V(x,y)^{1/2} \psi_j(x-l{\mathcal T}; k_{\alpha, j}^+)
     e^{i(k+l\tau + k_{\alpha,j}^+)y} \left(\mu_{\alpha, j}^+ k^2 +
     \lambda\right)^{-1/2},
     $$
     with $(l, \alpha) \in {\mathbb Z} \times {\mathcal S}_j$, $k \in
     \re$, and $(x,y) \in \rd$, the quantities $\mu_{\alpha, j}^+ > 0$ being defined in
     \eqref{per22}. It is easy to check that if $V$ satisfies
     \eqref{per20} with $m_1 > 1$, $m_2 > 1$, then ${\mathcal
     G}_1(\lambda) \in   S_2(l^2({\mathbb Z} \times {\mathcal
     S}_j) \otimes L^2(\re); L^2(\rd))$ for any $\lambda > 0$.
     \begin{theorem} \label{pert1}
     Let $W \in L^{\infty}(\re; \re)$ be a ${\mathcal T}$-periodic function.
     Let $V$ satisfy \eqref{per20} with $m_1>1$, $m_2>1$. Fix $j
     \in {\mathbb N}$. Assume that \eqref{4} holds true, and the
     set ${\mathcal M}_j^+$ contains only non-degenerate points.
     Then for each $\varepsilon \in (0,1)$ we have
     \bel{per23}
     n_*(1+\varepsilon; {\mathcal G}_1(\lambda)) + O(1) \leq
     {\mathcal N}_j^+(\lambda) \leq n_*(1-\varepsilon; {\mathcal G}_1(\lambda)) +
     O(1),
     \ee
     as $\lambda \downarrow 0$.
     \end{theorem}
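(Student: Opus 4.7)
The plan is a Birman--Schwinger reduction followed by a careful analysis of the singular structure of the resolvent of $H_0$ near the spectral edge $z={\mathcal E}_j^++\lambda$.

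First I would apply the Birman--Schwinger principle. Fix an auxiliary point $z_0\in({\mathcal E}_j^+,{\mathcal E}_{j+1}^-)$. For $\lambda>0$ small enough, both $z$ and $z_0$ lie in the open gap, and the standard counting formula for the positive perturbation $V$ yields
\[
\#\{\text{eigenvalues of }H_+\text{ in }({\mathcal E}_j^++\lambda,z_0)\}=n_+(1;T(\lambda))-n_+(1;V^{1/2}(z_0-H_0)^{-1}V^{1/2}),
\]
where $T(\lambda):=V^{1/2}({\mathcal E}_j^++\lambda-H_0)^{-1}V^{1/2}$. Since $V$ is relatively compact with respect to $H_0$, the second count is a finite fixed number; and since positive perturbations forbid accumulation at the upper edge ${\mathcal E}_{j+1}^-$, the contribution from $[z_0,{\mathcal E}_{j+1}^-)$ is uniformly bounded as $\lambda\downarrow 0$. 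This gives ${\mathcal N}_j^+(\lambda)=n_+(1;T(\lambda))+O(1)$.

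Second I would split $T(\lambda)$ into a singular and a bounded part. Using ${\mathcal F}H_0{\mathcal F}^*=\int_{\re}^{\oplus}h(k)\,dk$, pick a smooth $\tau$-periodic cutoff $\chi_\delta$ supported in small disjoint neighborhoods of the non-degenerate maxima $k_{\alpha,j}^+$, equal to $1$ on even smaller ones. Then
\[
(z-h(k))^{-1}=\frac{\chi_\delta(k)\,\pi_j(k)}{z-E_j(k)}+\Theta(k,\lambda),
\]
where $\Theta(k,\lambda)$ is bounded uniformly in $k\in[0,\tau)$ and $\lambda\in(0,\lambda_0)$: bands $j'\geq j+1$ are controlled by the band separation ${\mathcal E}_{j+1}^--{\mathcal E}_j^+$; bands $j'\leq j-1$ by the ordering $E_{j-1}(k)\leq E_j(k)$ (with a further subcutoff in case of a lower-band tangency at ${\mathcal E}_j^+$); and on $\supp\chi_\delta$ the contribution $\chi_\delta(k)(1-\pi_j(k))(h(k)-z)^{-1}$ is bounded by the fibre spectral gap around $E_j(k)$. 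Conjugating by $V^{1/2}{\mathcal F}^*$ and ${\mathcal F}V^{1/2}$ and exploiting the decay \eqref{per20} with $m_1,m_2>1$, the resulting remainder $R_0(\lambda)$ is uniformly bounded in some Schatten class $S_p$ with $p<\infty$.

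Third I would identify the singular piece with ${\mathcal G}_1(\lambda){\mathcal G}_1(\lambda)^*$. Near $k=k_{\alpha,j}^++l\tau$, the non-degeneracy and \eqref{per22} give $z-E_j(k)=\lambda+\mu_{\alpha,j}^+(k-k_{\alpha,j}^+-l\tau)^2+O(|k-k_{\alpha,j}^+-l\tau|^3)$, while Lemma \ref{leiwa} together with \eqref{per100} gives $\psi_j(\cdot;k)=\psi_j(\cdot-l{\mathcal T};k_{\alpha,j}^+)+O(|k-k_{\alpha,j}^+-l\tau|)$ in $L^2(\re)$. Changing variables $k'=k-k_{\alpha,j}^+-l\tau$ in the fibre integral, the leading term reproduces exactly the integral kernel of ${\mathcal G}_1(\lambda){\mathcal G}_1(\lambda)^*$ (the phase $e^{i(k'+l\tau+k_{\alpha,j}^+)(y-y')}$ arises from Fourier inversion in $y$). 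The Taylor remainders contribute extra factors $|k'|$ or $|k'|^3$ which improve the integrability of $(\mu_{\alpha,j}^+k'^2+\lambda)^{-1}$ just enough, via the decay of $V$, to keep their $S_p$-norms uniformly bounded. Altogether,
\[
T(\lambda)={\mathcal G}_1(\lambda){\mathcal G}_1(\lambda)^*+R(\lambda),\qquad \sup_{\lambda\in(0,\lambda_0)}\|R(\lambda)\|_{S_p}<\infty.
\]

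Finally, the Weyl inequality \eqref{per31} applied to $T(\lambda)={\mathcal G}_1(\lambda){\mathcal G}_1(\lambda)^*+R(\lambda)$, combined with \eqref{per33} and the uniform $S_p$ bound, gives $n_\pm(\varepsilon;R(\lambda))=O(1)$, whence
\[
n_+(1+\varepsilon;{\mathcal G}_1(\lambda){\mathcal G}_1(\lambda)^*)+O(1)\leq n_+(1;T(\lambda))\leq n_+(1-\varepsilon;{\mathcal G}_1(\lambda){\mathcal G}_1(\lambda)^*)+O(1).
\]
Using the identity $n_*(s;{\mathcal G}_1(\lambda))=n_+(s^2;{\mathcal G}_1(\lambda){\mathcal G}_1(\lambda)^*)$ and relabeling $\varepsilon$ yields \eqref{per23}. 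I expect the hardest step to be the third one: the factor $(\mu_{\alpha,j}^+k'^2+\lambda)^{-1}$ in the singular kernel is individually unbounded as $\lambda\downarrow 0$, so the Taylor-remainder improvements in both $E_j$ and $\psi_j$ must be tracked in a fixed Schatten class (not just in operator norm) to secure the uniform $S_p$-bound on $R(\lambda)$ that the Weyl step demands.
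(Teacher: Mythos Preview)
Your proposal is correct and follows essentially the same route as the paper: Birman--Schwinger reduction, localization to small neighborhoods of the non-degenerate maxima via the $j$-th band projection, quadratic replacement of ${\mathcal E}_j^+-E_j(k)$ together with freezing of $\psi_j(\cdot;k)$ at $k_{\alpha,j}^+$, and Weyl/Ky~Fan inequalities with a uniform Schatten bound on the remainder. The only execution difference is that the paper performs the Taylor step at the square-root level---it defines operators $T_1,\dots,T_4$ with explicit kernels and bounds successive differences directly in $S_2$---whereas you propose bounding $R(\lambda)=T(\lambda)-{\mathcal G}_1(\lambda){\mathcal G}_1(\lambda)^*$ in some $S_p$; the former is slightly cleaner since the Hilbert--Schmidt norms are read off immediately from the kernels, but the two are equivalent. (Your caveat about a possible lower-band tangency is in fact unnecessary: simplicity of the fibre spectrum gives $E_{j-1}(k)<E_j(k)\leq{\mathcal E}_j^+$ for all $k$, and periodicity plus continuity then force ${\mathcal E}_{j-1}^+<{\mathcal E}_j^+$.)
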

     The proof of Theorem \ref{pert1} can be found in Subsection
     \ref{ss41}.\\
     Our next goal is to give an equivalent formulation of Theorem
     \ref{pert1} in the terms of an explicit effective
     Hamiltonian. Define the ``diagonal" operator $\mu \in {\mathcal L}(l^2({\mathbb Z}
     \times {\mathcal S}_j))$ by
     $$
     (\mu {\bf u})_{l, \alpha} : = \mu_{\alpha, j}^+ u_{l,\alpha},
     \quad l \in {\mathbb Z}, \quad \alpha \in {\mathcal S}_j,
     $$
     where ${\bf u} : = \left\{u_{l,\alpha}\right\}_{(l,\alpha)  \in
     {\mathbb Z} \times {\mathcal S}_j} \in l^2({\mathbb Z}
     \times {\mathcal S}_j)$. On $l^2({\mathbb Z}
     \times {\mathcal S}_j) \otimes H^2(\re)$ define the operator
     $$
     {\mathcal H}_0 : = \mu \otimes \left(-\frac{d^2}{dy^2}\right)
     $$
     self-adjoint in $l^2({\mathbb Z}
     \times {\mathcal S}_j) \otimes L^2(\re)$. Further, define the
     operator ${\mathcal V} \in {\mathcal L}(l^2({\mathbb Z}
     \times {\mathcal S}_j) \otimes L^2(\re))$ by
     $$
     \left({\mathcal V} {\bf w}\right)_{l, \alpha}(y) : = \sum_{m
     \in {\mathbb Z}, \; \beta \in {\mathcal S}_j} {\mathcal
     V}_{l, \alpha; m, \beta}(y) w_{m, \beta}(y), \quad y \in \re,
    $$
    where
    $$
    {\mathcal
     V}_{l, \alpha; m, \beta}(y) : = \frac{1}{2\pi} \int_{\re}
     V(x,y) \psi_j(x-l {\mathcal T}; k_{\alpha, j}^+) \psi_j(x-m {\mathcal T}; k_{\beta,
     j}^+)dx \; e^{-i((l-m)\tau + k_{\alpha, j}^+ - k_{\beta,
     j}^+)y},
     $$
     and ${\bf w} \in l^2({\mathbb Z} \times {\mathcal S}_j) \otimes
     L^2(\re)$. Thus the operator ${\mathcal H}_0 - g {\mathcal
     V}$ with $g \geq 0$, self-adjoint on ${\rm Dom}({\mathcal
     H}_0)$, can be interpreted as a Schr\"odinger operator on the
     real line with infinite-matrix-valued attractive potential $-g{\mathcal
     V}$, and a coupling constant $g \geq 0$. \\
     Applying the Birman-Schwinger principle and the inverse
     Fourier transform with respect to $k \in \re$, we easily find
     that Theorem \ref{pert1} yields the following
     \begin{follow} \label{perf1}
     Under the hypotheses of Theorem \ref{pert1} we have
     $$
     {\rm rank}\,{\mathbb P}_{(-\infty, -\lambda)}({\mathcal H}_0 - (1-\varepsilon) {\mathcal
     V}) + O(1) \leq {\mathcal N}_j^+(\lambda) \leq
     {\rm rank}\,{\mathbb P}_{(-\infty, -\lambda)}({\mathcal H}_0 - (1+\varepsilon) {\mathcal
     V}) + O(1),
     $$
     as $\lambda \downarrow 0$, for any $\varepsilon \in (0,1)$.
     \end{follow}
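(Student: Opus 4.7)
The plan is to translate the singular-value bounds of Theorem \ref{pert1} into spectral-projection bounds for the effective Hamiltonian $\mathcal{H}_0 - g\mathcal{V}$ via the Birman--Schwinger principle, the bridge between the two formulations being the partial inverse Fourier transform in the variable $k \in \re$. The decisive observation is that multiplication by $(\mu_{\alpha,j}^+ k^2 + \lambda)^{-1/2}$ on $L^2(\re_k)$, which appears in the integral kernel of $\mathcal{G}_1(\lambda)$, is intertwined by the Fourier transform with $(\mathcal{H}_0 + \lambda)^{-1/2}$ on $l^2(\ze \times \mathcal{S}_j) \otimes L^2(\re_y)$.

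To implement this, I would first introduce the bounded operator $P\colon l^2(\ze \times \mathcal{S}_j) \otimes L^2(\re_y) \to L^2(\rd)$ defined by
$$
(Pf)(x,y) := V(x,y)^{1/2}\sum_{l,\alpha} \psi_j(x-l{\mathcal T}; k_{\alpha,j}^+)\, e^{i(l\tau + k_{\alpha,j}^+)y}\, f_{l,\alpha}(y),
$$
and verify by direct substitution in the integral kernel that $\mathcal{G}_1(\lambda) = P\, (\mathcal{H}_0 + \lambda)^{-1/2}\, (I \otimes \mathcal{F}^{-1})$, where $\mathcal{F}^{-1}$ denotes the 1D inverse Fourier transform. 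A second kernel computation, collapsing the overlap integral $\int V(x,y)\psi_j(x-l{\mathcal T}; k_{\alpha,j}^+)\psi_j(x-m{\mathcal T}; k_{\beta,j}^+)\,dx$ that appears in $P^*P$, identifies $P^*P$ with a positive multiple of $\mathcal{V}$; in particular, this establishes $\mathcal{V}\geq 0$.

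Combining these factorizations, $\mathcal{G}_1(\lambda)^*\mathcal{G}_1(\lambda)$ is unitarily equivalent to a constant multiple of $(\mathcal{H}_0+\lambda)^{-1/2}\mathcal{V}(\mathcal{H}_0+\lambda)^{-1/2}$, and therefore
$$
n_*(s;\mathcal{G}_1(\lambda)) = n_+\bigl(c\,s^2;\,(\mathcal{H}_0+\lambda)^{-1/2}\mathcal{V}(\mathcal{H}_0+\lambda)^{-1/2}\bigr)
$$
for an explicit $c>0$. Since $\mathcal{V}\geq 0$ and the decay hypothesis \eqref{per20} renders the operator on the right-hand side compact, the Birman--Schwinger principle yields
$$
{\rm rank}\,{\mathbb P}_{(-\infty,-\lambda)}(\mathcal{H}_0 - g\mathcal{V}) = n_+\bigl(1/g;\,(\mathcal{H}_0+\lambda)^{-1/2}\mathcal{V}(\mathcal{H}_0+\lambda)^{-1/2}\bigr)
$$
for every $g > 0$. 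Setting $1/g = c s^2$, substituting $s = 1 \mp \varepsilon'$ into the two-sided bound of Theorem \ref{pert1}, and re-denoting the small parameter, one arrives at the corollary; the $O(1)$ remainder passes through intact.

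The substance of the argument is essentially bookkeeping --- the Fourier factorization of $\mathcal{G}_1(\lambda)$ and the identification of $P^*P$ with a scalar multiple of $\mathcal{V}$. The only subtlety I anticipate is matching the two small parameters correctly, i.e.\ absorbing the Fourier normalization constant $c$ into the rescaling $\varepsilon' \mapsto \varepsilon$; this is purely algebraic but is the step most prone to slip.
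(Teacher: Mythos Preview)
Your proposal is correct and matches the paper's approach exactly: the paper's own proof consists solely of the remark that the corollary follows from Theorem~\ref{pert1} by the Birman--Schwinger principle and the inverse Fourier transform in $k$, and your factorization $\mathcal{G}_1(\lambda)=P(\mathcal{H}_0+\lambda)^{-1/2}(I\otimes\mathcal{F}^{-1})$ together with the identification $P^*P=2\pi\,\mathcal{V}$ is precisely how that remark unpacks. Your flagged concern about the normalization constant is apt---the factor $2\pi$ does surface---but it is absorbed into the relabeling of $\varepsilon$ just as you anticipate.
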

     Assuming a somewhat faster decay of $V$ as $y \to \infty$,
     we can obtain an asymptotic estimate similar to \eqref{per23}
     involving an operator which is simpler  than ${\mathcal G}_1(\lambda)$.
     Define ${\mathcal G}_2 : l^2({\mathbb Z} \times {\mathcal S}_j) \to L^2(\rd)$ as the operator with integral kernel
     $$
     \left({\mu_{\alpha, j}^+}\right)^{-1/4} V(x,y)^{1/2} \psi_j(x-l {\mathcal T}; k_{\alpha, j}^+)e^{iy(l\tau + k_{\alpha, j}^+)},
     \quad (l, \alpha) \in {\mathbb Z} \times {\mathcal S}_j, \quad (x,y) \in \rd.
     $$
    Again, if $V$ satisfies \eqref{per20} with $m_1 > 1$, $m_2 > 1$, then
    ${\mathcal G}_2 \in S_2( l^2({\mathbb Z} \times {\mathcal S}_j); L^2(\rd))$.
     \begin{theorem} \label{pert2}
     Let $W \in L^{\infty}(\re; \re)$ be a ${\mathcal T}$-periodic function. Let $V$ satisfy \eqref{per20} with $m_1 > 1$ and $m_2 > 3$.
     Fix $j \in {\mathbb N}$ and assume \eqref{4}. Then for each $\varepsilon \in (0,1)$ we have
        \bel{per57}
        n_*\left((1+\varepsilon) \sqrt{2\sqrt{\lambda}}; {\mathcal G}_2\right) + O(1) \leq {\mathcal N}_j^+(\lambda)
        \leq n_*\left((1-\varepsilon) \sqrt{2\sqrt{\lambda}}; {\mathcal G}_2\right) + O(1), \quad \lambda \downarrow 0.
        \ee
        \end{theorem}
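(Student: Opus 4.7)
The plan is to deduce Theorem \ref{pert2} from Theorem \ref{pert1} by showing that the singular-value counting function of $\mathcal{G}_1(\lambda)$ can be replaced, modulo an $O(1)$ error, by that of the $\lambda$-independent operator $\mathcal{G}_2$ rescaled by the factor $\sqrt{2\sqrt{\lambda}}$. The heuristic driving this is that, for small $\lambda$, the operator $\mathcal{G}_1(\lambda)\mathcal{G}_1(\lambda)^*$ is dominated by $\tfrac{1}{2\sqrt{\lambda}}\mathcal{G}_2\mathcal{G}_2^*$, with a bounded correction; the factor $\sqrt{2\sqrt{\lambda}}$ in \eqref{per57} then arises simply from the identity $n_*(s;T)=n_+(s^2;TT^*)$.

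First I would compute $\mathcal{G}_1(\lambda)\mathcal{G}_1(\lambda)^*$ as an integral operator on $L^2(\rd)$. Using the standard identity
\[
\frac{1}{2\pi}\int_\re \frac{e^{ik(y-y')}}{\mu k^2+\lambda}\,dk=\frac{1}{2\sqrt{\mu\lambda}}\,e^{-\sqrt{\lambda/\mu}\,|y-y'|},
\]
the kernel of $\mathcal{G}_1(\lambda)\mathcal{G}_1(\lambda)^*$ takes the form
\[
V(x,y)^{1/2}V(x',y')^{1/2}\sum_{l\in\ze,\,\alpha\in\mathcal{S}_j}\frac{e^{-\sqrt{\lambda/\mu_{\alpha,j}^+}\,|y-y'|}}{2\sqrt{\mu_{\alpha,j}^+\lambda}}\,e^{i(l\tau+k_{\alpha,j}^+)(y-y')}\psi_j(x-l\mathcal{T};k_{\alpha,j}^+)\psi_j(x'-l\mathcal{T};k_{\alpha,j}^+).
\]
Replacing the exponential factor $e^{-\sqrt{\lambda/\mu_{\alpha,j}^+}|y-y'|}$ by $1$ produces exactly $\tfrac{1}{2\sqrt{\lambda}}\mathcal{G}_2\mathcal{G}_2^*$; denote the resulting difference by $\mathcal{R}(\lambda) := \mathcal{G}_1(\lambda)\mathcal{G}_1(\lambda)^* - \tfrac{1}{2\sqrt{\lambda}}\mathcal{G}_2\mathcal{G}_2^*$.

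The crucial analytic step is a uniform Hilbert-Schmidt bound $\|\mathcal{R}(\lambda)\|_2 = O(1)$ as $\lambda\downarrow 0$. The factor $(e^{-\sqrt{\lambda/\mu_{\alpha,j}^+}|y-y'|}-1)/\sqrt{\lambda}$ in the kernel of $\mathcal{R}(\lambda)$ is majorized in absolute value by $|y-y'|/\sqrt{\mu_{\alpha,j}^+}$ (via $|1-e^{-t}|\leq t$ for $t\geq 0$), so $\|\mathcal{R}(\lambda)\|_2^2$ is bounded by a fixed integral of the form $\int V(x,y)V(x',y')|y-y'|^2 F(x,x')\,dx\,dx'\,dy\,dy'$, where $F$ collects the lattice sums of products $\psi_j(x-l\mathcal{T};k_{\alpha,j}^+)\psi_j(x'-l\mathcal{T};k_{\alpha,j}^+)$. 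The lattice sum is tame thanks to the uniform exponential decay of the band eigenfunctions (Agmon-type estimates applied to the 1D perturbed harmonic oscillator $\tilde{h}(k)$), while the integrability in $y,y'$ requires precisely the moment condition $m_2>3$, together with $m_1>1$ for the $x$-integral.

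With this in hand, I would finish by invoking the Weyl inequality \eqref{per31} for $\mathcal{G}_1(\lambda)\mathcal{G}_1(\lambda)^* = \tfrac{1}{2\sqrt{\lambda}}\mathcal{G}_2\mathcal{G}_2^* + \mathcal{R}(\lambda)$, using the identity $n_+\bigl(s^2;\tfrac{1}{2\sqrt{\lambda}}\mathcal{G}_2\mathcal{G}_2^*\bigr) = n_*\bigl(s\sqrt{2\sqrt{\lambda}};\mathcal{G}_2\bigr)$ and the bound $n_+(\eta s^2;\mathcal{R}(\lambda))\leq \|\mathcal{R}(\lambda)\|_2^2/(\eta s^2)^2 = O(1)$ from \eqref{per33}, to conclude
\[
n_*\bigl(s\sqrt{(1+\eta)2\sqrt{\lambda}};\mathcal{G}_2\bigr)-O(1)\leq n_*(s;\mathcal{G}_1(\lambda))\leq n_*\bigl(s\sqrt{(1-\eta)2\sqrt{\lambda}};\mathcal{G}_2\bigr)+O(1)
\]
for any $\eta\in(0,1)$ and $s>0$. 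Combining this with Theorem \ref{pert1} at $s=1\mp\varepsilon$ and absorbing $\varepsilon,\eta$ into a single arbitrary parameter $\varepsilon'\in(0,1)$ via $(1\mp\varepsilon)\sqrt{1\mp\eta}=1\mp\varepsilon'$ delivers \eqref{per57}. The principal obstacle is the uniform Hilbert-Schmidt estimate on $\mathcal{R}(\lambda)$: it is this step that forces the stronger decay assumption $m_2>3$ compared to Theorem \ref{pert1}, and it requires honest control of the double lattice sum via eigenfunction decay estimates.
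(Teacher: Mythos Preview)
Your proposal is correct and follows essentially the same route as the paper: compute the kernel of $\mathcal{G}_1(\lambda)\mathcal{G}_1(\lambda)^*$, replace $e^{-\sqrt{\lambda/\mu_{\alpha,j}^+}|y-y'|}$ by $1$ to obtain $\tfrac{1}{2\sqrt{\lambda}}\mathcal{G}_2\mathcal{G}_2^*$, bound the Hilbert--Schmidt norm of the difference uniformly in $\lambda$ via $|1-e^{-t}|\leq t$ (which produces the factor $|y-y'|^2$ and hence the requirement $m_2>3$), and then feed this into the Weyl inequalities and Theorem~\ref{pert1}. The only minor deviation is in handling the $x$-lattice sum: the paper does not invoke Agmon-type pointwise decay of $\psi_j$ but instead applies Cauchy--Schwarz in the $(l,\alpha)$-sum and uses only the $L^2$-normalization $\|\psi_j(\cdot;k_{\alpha,j}^+)\|_{L^2(\re)}=1$ together with $m_1>1$, which is slightly more economical.
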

        The proof of Theorem \ref{pert2} can be found in Subsection \ref{ss42}.

     \subsection{Asymptotic bounds of ${\mathcal N}_j^+(\lambda)$ for compactly supported $V$}
    \label{ss33}
    Theorems \ref{pert1} -- \ref{pert2} can be used for the investigation of the
    asymptotic behaviour as $\lambda \downarrow 0$ of ${\mathcal N}_j^+(\lambda)$
    for a large class of rapidly decaying perturbations $V$. In this subsection we
     concentrate on perturbations of compact support. This choice is motivated by:
    \begin{itemize}
    \item the fact that in this case we prove asymptotically Gaussian (i.e. the fastest known)
    convergence of the discrete eigenvalue of the operator $H_+$ to the edge ${\mathcal E}_j^+$ of the gap in $\sigma(H_0)$
    which is  a non-semiclassical behaviour;
    similar Gaussian convergence has been recently found in \cite{bmr} in the case of monotone step-like
    edge potential $W$ and additional assumptions of the geometry of ${\rm supp}\,V$;
        \item the relation between our results and the numerous recent results on the asymptotics of the
        discrete spectrum for various (electric, magnetic, or geometric) compactly supported perturbations
        of the Landau Hamiltonian (see e.g. \cite{rw, mroz, rozt, proz, pe, proz2});
            \item the possible applications in the mathematical theory of the quantum Hall effect and the
            related spectral theory of random Anderson-type perturbations of $H_0(b;W)$, i. e. operators
            of the form $H_\omega = H_0 + V_\omega$ where $V_\omega({\bf x}) =
\sum_{{\bf m} \in {\mathbb Z}^2} \lambda_{\bf m}(\omega) u({\bf x
} - {\bf m})$, ${\bf x} \in \rd$, $\omega \in \Omega$, $\Omega$ is
a probability space,  $\left\{\lambda_{{\bf
m}}(\omega)\right\}_{{\bf m} \in {\mathbb Z}^2}$ are i.i.d. random
variables, and $u \geq 0$ is the deterministic compactly supported
single-site potential; note that the estimates for the discrete
eigenvalues for compactly supported perturbations of the Landau
Hamiltonian obtained in \cite{rw} have been successfully applied
to the study of various spectral and dynamical properties of
random Anderson-type perturbations of the same operator (see
\cite{chkr, kr, gks, Kl1}).
    \end{itemize}
    In order to formulate our last theorem we need the
    following notations. For $t > 0$ set
    ${\rm Ent}\,(t) : = \min\{l \in {\mathbb N} \, | \, l \geq t\}$.
    Further, let $\Omega \subset \rd$ be an open, bounded, non-empty set.
    Let ${\bf V}(\Omega)$ be the set of the closed vertical intervals ${\mathcal J} \subset \Omega$
    of positive length $|{\mathcal J}|$. Evidently, ${\bf V}(\Omega) \neq \emptyset$.
    Put
    $$
    {\mathcal C}(\Omega) : = \sup_{{\mathcal J} \in {\bf V}(\Omega)}
    \frac{1}{{\rm Ent}\,\left(\frac{2\pi}{b {\mathcal T}|{\mathcal J}|}\right)}.
    $$
    Note that if ${\mathcal J} \in {\bf V}(\Omega)$, then there exists a horizontal
    interval ${\mathcal I}$ of positive length, such that the rectangle
    ${\mathcal I} \times {\mathcal J}$ is contained in $\Omega$.
    \begin{theorem} \label{pert3}
    Let $W \in L^{\infty}(\re; \re)$ be a ${\mathcal T}$-periodic function.
Suppose that $V : \rd \to [0,\infty)$ is a Lebesgue measurable
function such that
     \bel{per59}
     C_- \chi_{\Omega_-}(x,y) \leq V(x,y) \leq  C_+ \chi_{\Omega_+}(x,y), \quad (x,y) \in \rd,
     \ee
     where $\chi_{\Omega_\pm}$ are the characteristic functions of the open, bounded and non-empty sets
     $\Omega_\pm \subset \rd$, and $C_\pm \in (0,\infty)$ are constants.
      Fix $j \in {\mathbb N}$ and assume \eqref{4}. Suppose that the set ${\mathcal M}_j^+$ contains
      only non-degenerate points.  Then  we have
        \bel{per58}
        \frac{\sqrt{2}}{\sqrt{b} {\mathcal T}}{\mathcal C}(\Omega_-) \leq
        \liminf_{\lambda \downarrow 0} |\ln{\lambda}|^{-1/2} {\mathcal N}_j^+(\lambda) \leq
        \limsup_{\lambda \downarrow 0} |\ln{\lambda}|^{-1/2} {\mathcal N}_j^+(\lambda) \leq
        \frac{\sqrt{2}}{\sqrt{b} {\mathcal T}} A_j^+
        \ee
        where, as earlier, $A_j^+  = \# {\mathcal M}_j^+$. In particular, if $A_j^+ = 1$,
        and there exists a closed vertical interval ${\mathcal J} \subset \Omega_-$ of length
        $|{\mathcal J}| \geq \frac{2\pi}{b {\mathcal T}}$ so that ${\mathcal C}(\Omega_-) = 1$, we have
        $$
        \lim_{\lambda \downarrow 0} |\ln{\lambda}|^{-1/2} {\mathcal N}_j^+(\lambda) = \frac{\sqrt{2}}{\sqrt{b} {\mathcal T}}.
        $$
        \end{theorem}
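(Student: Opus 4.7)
The strategy is to invoke Theorem~\ref{pert2} to reduce the problem to spectral asymptotics of the compact operator $\mathcal{G}_2$, and then to analyze $\mathcal{G}_2^*\mathcal{G}_2$ using the Gaussian localization of the harmonic-oscillator-type eigenfunctions $\psi_j(\cdot;k_{\alpha,j}^+)$. Since compactly supported $V$ satisfies \eqref{per20} with arbitrary $m_1,m_2>0$, Theorem~\ref{pert2} applies, and using $|\ln\sqrt{2\sqrt{\lambda}}|=\tfrac14|\ln\lambda|(1+o(1))$ as $\lambda\downarrow 0$ the bounds \eqref{per58} reduce to establishing
$$
\tfrac{2\sqrt{2}}{\sqrt{b}\mathcal{T}}\,\mathcal{C}(\Omega_-)\leq\liminf_{s\downarrow 0}\frac{n_*(s;\mathcal{G}_2)}{|\ln s|^{1/2}}\leq\limsup_{s\downarrow 0}\frac{n_*(s;\mathcal{G}_2)}{|\ln s|^{1/2}}\leq\tfrac{2\sqrt{2}}{\sqrt{b}\mathcal{T}}\,A_j^+.
$$

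For the upper bound I would decompose $\mathcal{G}_2=\sum_{\alpha\in\mathcal{S}_j}T_\alpha$, where $T_\alpha:l^2(\mathbb{Z})\to L^2(\rd)$ is the restriction of $\mathcal{G}_2$ to the $\alpha$-th block. Iterating the Ky~Fan inequality \eqref{per32} yields $n_*(s;\mathcal{G}_2)\leq\sum_\alpha n_*(s/A_j^+;T_\alpha)$, reducing the task to proving $\limsup_{s\downarrow 0}|\ln s|^{-1/2}n_*(s;T_\alpha)\leq 2\sqrt{2}/(\sqrt{b}\mathcal{T})$ for each $\alpha\in\mathcal{S}_j$. Using $V\leq C_+\chi_{\Omega_+}$ we may assume $V=C_+\chi_{\Omega_+}$. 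The matrix $T_\alpha^*T_\alpha$ on $l^2(\mathbb{Z})$ has entries
$$
(T_\alpha^*T_\alpha)_{ll'}=\mu_{\alpha,j}^{-1/2}\int_{\rd}V(x,y)\,\psi_j(x-l\mathcal{T};k_{\alpha,j}^+)\,\psi_j(x-l'\mathcal{T};k_{\alpha,j}^+)\,e^{iy(l-l')\tau}\,dx\,dy.
$$
Because $\psi_j(\cdot;k_{\alpha,j}^+)$ is localized near $x=k_{\alpha,j}^+/b$ with Gaussian width $\sim b^{-1/2}$, completing the square in the Gaussian product produces the diagonal behavior $(T_\alpha^*T_\alpha)_{ll}\asymp e^{-bl^2\mathcal{T}^2}$ and the off-diagonal bound $|(T_\alpha^*T_\alpha)_{ll'}|\lesssim e^{-b(l^2+l'^2)\mathcal{T}^2/2}$ for $|l|,|l'|$ large. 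The threshold $e^{-bl^2\mathcal{T}^2}>s^2$ is equivalent to $|l|<\sqrt{2|\ln s|/(b\mathcal{T}^2)}$, giving a diagonal count of $\sim\tfrac{2\sqrt{2}}{\mathcal{T}\sqrt{b}}|\ln s|^{1/2}$. Splitting $T_\alpha^*T_\alpha=D_\alpha+R_\alpha$ into diagonal and off-diagonal parts and controlling $R_\alpha$ via Weyl's inequality \eqref{per31} converts this into a rigorous eigenvalue count with the claimed leading constant.

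For the lower bound, the min-max principle together with $V\geq C_-\chi_{\Omega_-}$ reduces to $V=C_-\chi_{\Omega_-}$. For any $\mathcal{J}\in\mathbf{V}(\Omega_-)$ there is a horizontal interval $\mathcal{I}$ with $\mathcal{I}\times\mathcal{J}\subset\Omega_-$, allowing the further replacement $V=C_-\chi_{\mathcal{I}\times\mathcal{J}}$. Selecting a single $\alpha_0\in\mathcal{S}_j$ and using the restriction bound $n_*(s;\mathcal{G}_2)\geq n_*(s;T_{\alpha_0})$, it suffices to lower bound the latter. Since $e^{il\tau y}$ has period $2\pi/\tau=2\pi/(b\mathcal{T})$ in $y$, the interval $\mathcal{J}$ samples only the fraction $1/{\rm Ent}(2\pi/(b\mathcal{T}|\mathcal{J}|))$ of the available quasi-momentum modes. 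A test subspace of $l^2(\mathbb{Z})$ built from Fourier modes on $[0,2\pi/\tau)$ adapted to the tiling of the period by translates of $\chi_\mathcal{J}$, truncated by the Gaussian cutoff $|l|\lesssim\sqrt{|\ln s|}/(\mathcal{T}\sqrt{b})$, then realizes the constant $\tfrac{2\sqrt{2}}{\sqrt{b}\mathcal{T}}\mathcal{C}(\Omega_-)$ upon optimizing over $\mathcal{J}\in\mathbf{V}(\Omega_-)$.

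The main obstacle is the rigorous realization of the factor $\mathcal{C}(\Omega_-)$ in the lower bound: producing an explicit test subspace whose Gram matrix is well-conditioned, so that $T_{\alpha_0}^*T_{\alpha_0}\geq s^2I$ on it. This requires careful control of the cross terms between the Fourier modes and the Gaussian factors in $l$, and precise matching of the tiling fraction $1/{\rm Ent}(\cdot)$ with the Gaussian counting. The concluding statement for $A_j^+=1$ with $\mathcal{C}(\Omega_-)=1$ then follows immediately, since the upper and lower bounds coincide.
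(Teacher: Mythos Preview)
Your overall strategy---reduce via Theorem~\ref{pert2} to the singular-value counting of $\mathcal{G}_2$, then exploit the Gaussian localization of the eigenfunctions (this is Lemma~\ref{perl1} in the paper)---is exactly the route the paper takes. The differences are in the execution of the two bounds, and in both places your plan, as written, is incomplete where the paper supplies a short concrete device.

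\medskip
\emph{Upper bound.} The diagonal/off-diagonal split $T_\alpha^*T_\alpha=D_\alpha+R_\alpha$ with $R_\alpha$ handled through the Weyl inequality does not close from the entrywise bound $|R_{ll'}|\lesssim d_l d_{l'}$, $d_l:=e^{-bl^2\mathcal{T}^2/2}$, alone. That estimate gives only $\|R_\alpha\|\leq C\|d\|_{l^2}^2$ and $\|R_\alpha\|_2<\infty$, hence $n_+(\varepsilon s^2;R_\alpha)=O(s^{-4})$, far too large; and one cannot infer $R_\alpha\leq C\,D_\alpha$ from an entrywise inequality (the Cauchy--Schwarz needed would require $\sum_l 1<\infty$). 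The paper avoids the split altogether: it majorizes $T_5(\mathcal{R}_+)^*T_5(\mathcal{R}_+)$ directly by a \emph{diagonal} operator $M_3^+$ via the weighted Cauchy--Schwarz $2|a_la_m|\leq (l^2+1)(m^2+1)^{-1}|a_l|^2+(m^2+1)(l^2+1)^{-1}|a_m|^2$, summed over $m$. The resulting diagonal entries are $\nu_{l,\alpha}^+\asymp (l^2+1)\int_{\mathcal{I}_+}\psi_j(x-l\mathcal{T};k_{\alpha,j}^+)^2\,dx$, and since the polynomial prefactor is invisible against the Gaussian, Lemma~\ref{perl1} delivers the count with the correct constant $\tfrac{2\sqrt{2}}{\sqrt{b}\mathcal{T}}A_j^+$. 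No preliminary decomposition over $\alpha$ is needed.

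\medskip
\emph{Lower bound.} You flag this as the main obstacle, and your description (``Fourier modes adapted to the tiling of the period by translates of $\chi_{\mathcal{J}}$'') does not yet name the mechanism. The paper's device is short and explicit: given a closed vertical interval $\mathcal{J}_-\subset\Omega_-$ of length $q$, set $L:={\rm Ent}\,(2\pi/(\tau q))$ so that $2\pi/(\tau L)\leq q$, pick $\mathcal{I}_-$ with $\mathcal{R}_-:=\mathcal{I}_-\times(0,2\pi/(\tau L))\subset\Omega_-$, and restrict $T_5(\mathcal{R}_-)$ to the subspace of $l^2(\mathbb{Z}\times\mathcal{S}_j)$ supported on $\{(l,\alpha):l\in L\mathbb{Z},\ \alpha=1\}$. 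On this sublattice the $y$-integral is
\[
\int_0^{2\pi/(\tau L)} e^{iL(m-m')\tau y}\,dy=\tfrac{2\pi}{\tau L}\,\delta_{m,m'},
\]
so the restricted Gram matrix is \emph{exactly} diagonal with entries $\nu_m^-=\tfrac{2\pi}{\tau L\sqrt{\mu_{1,j}^+}}\int_{\mathcal{I}_-}\psi_j(x-mL\mathcal{T};k_{1,j}^+)^2\,dx$. Lemma~\ref{perl1} then gives the count $\tfrac{2\sqrt{2}}{\sqrt{b}\mathcal{T}L}|\ln s|^{1/2}$, and optimizing over $\mathcal{J}_-$ produces $\mathcal{C}(\Omega_-)$. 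No Gram-matrix conditioning argument is required: orthogonality is forced by the sublattice restriction, which is precisely the missing idea in your sketch.
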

        The proof of Theorem \ref{pert3} can be found in Subsection \ref{ss43}.\\

        {\em Remarks}: (i)  Corollary \ref{f23} guarantees the existence of edge
        potentials $W$ and magnetic fields $b$ for which the set ${\mathcal M}_j^+$
        contains only non-degenerate points, and  $A_j^+ = 1$.
        Thus there exist explicit examples where the assumptions
        of Theorem \ref{pert3} are met.\\
        (ii) Theorem \ref{pert3} implies that every open gap
        $({\mathcal E}_j^+, {\mathcal E}_{j+1}^-)$ contains
        infinitely many discrete eigenvalues of the operator $H_+$
        for generic not identically vanishing decaying
        perturbations $V \geq 0$. By \eqref{per58} the asymptotic
        rate of the convergence of these eigenvalues is not faster
        than Gaussian. \\

        In principle, the analysis of the asymptotic
        behaviour as $\lambda \downarrow 0$ of ${\mathcal
        N}_j^+(\lambda)$ without the non-degeneracy assumption
        concerning the set ${\mathcal
        M}_j^+$ is also feasible but much more complicated from
        technical point of view, so that we omit the details.
        However, we would just like to note that
        $(k-k_{\alpha,j}^+)^{2l} = o((k-k_{\alpha,j}^+)^{2})$,
        as $k \to k_{\alpha,j}^+$,
if $l \in {\mathbb N}$, $l > 1$; hence, the replacement
        of non-degenerate points $k_{\alpha,j}^+
        \in {\mathcal
        M}_j^+$ by degenerate ones  does not decrease the quantity $\liminf_{\lambda
        \downarrow 0} |{\ln{\lambda}}|^{-1/2} {\mathcal
        N}_j^+(\lambda)$ (see below \eqref{per48}, \eqref{per70}, and
        \eqref{per71}). Thus we find that Theorem \ref{pert3}
        implies the following

        \begin{follow} \label{perf3}
        Let $W \in L^{\infty}(\re; \re)$ be a ${\mathcal T}$-periodic function.
    Assume that $V : \rd \to [0,\infty)$ is a Lebesgue measurable
function which satisfies \eqref{per21} and the lower bound in
\eqref{per59}.
      Fix $j \in {\mathbb N}$. Assume that the inequalities ${\mathcal E}_j^- < {\mathcal E}_{j}^+$ and \eqref{4} hold true.  Then
        $$
        0 <
        \liminf_{\lambda \downarrow 0} |\ln{\lambda}|^{-1/2} {\mathcal
        N}_j^+(\lambda).
        $$
         In particular, the open gap
        $({\mathcal E}_j^+, {\mathcal E}_{j+1}^-)$ contains
        infinitely many discrete eigenvalues of the operator
        $H_+$, and the asymptotic convergence of these eigenvalues
        to the edge ${\mathcal E}_j^+$ is not faster than
        Gaussian.
        \end{follow}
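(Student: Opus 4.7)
The plan is to recover the lower bound of Theorem \ref{pert3} without the non-degeneracy assumption on ${\mathcal M}_j^+$, exploiting the asymptotic flatness of $E_j$ near a degenerate maximum, as emphasized in the remark preceding the corollary. The elementary observation is that $(k-k^*)^{2l} = o((k-k^*)^2)$ as $k \to k^*$ whenever $l \geq 2$, so the true band function lies above any quadratic envelope of arbitrarily small curvature; in the effective one-dimensional reduction this corresponds to an arbitrarily small kinetic coefficient, which can only increase the count of eigenvalues trapped near ${\mathcal E}_j^+$.

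First I would pick any $k^* \in {\mathcal M}_j^+$; such a point exists and ${\mathcal M}_j^+$ is finite because $E_j$ is real analytic and, by the assumption ${\mathcal E}_j^- < {\mathcal E}_j^+$, not identically constant. By analyticity there exist $C^* > 0$ and $l^* \in {\mathbb N}$ with
$$E_j(k) = {\mathcal E}_j^+ - C^*(k - k^*)^{2l^*} + O(|k-k^*|^{2l^* + 1}), \qquad k \to k^*.$$
Consequently, for every $\tilde\mu > 0$ there is a neighborhood $U_{\tilde\mu}$ of $k^*$ on which
$$E_j(k) \geq {\mathcal E}_j^+ - \tilde\mu (k-k^*)^2;$$
for $l^* = 1$ this requires $\tilde\mu \geq C^*$, and for $l^* \geq 2$ any $\tilde\mu > 0$ is admissible provided $U_{\tilde\mu}$ is sufficiently small.

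Next I would run the lower-bound half of the proof of Theorem \ref{pert3}, restricting attention to quasimomenta $k^* + l\tau$, $l \in {\mathbb Z}$, and using only the minorant $V \geq C_-\chi_{\Omega_-}$. The derivation parallels that of Corollary \ref{perf1}: a Birman-Schwinger type reduction bounds ${\mathcal N}_j^+(\lambda)$ from below by the negative spectral counting function of a one-dimensional effective operator of the form $\tilde\mu(-d^2/dy^2) - (1+\varepsilon){\mathcal V}_{\rm loc}$, where ${\mathcal V}_{\rm loc}$ is the single-channel analogue of ${\mathcal V}$ built from $\psi_j(\cdot;k^*)$ and its lattice translates, with $\tilde\mu$ now coming from the quadratic minorant above rather than from a genuine curvature $\mu_{\alpha,j}^+$. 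Since the lower bound in \eqref{per58} is independent of the effective mass -- the constant $\frac{\sqrt{2}}{\sqrt{b}\,{\mathcal T}}{\mathcal C}(\Omega_-)$ is purely geometric -- this yields
$$\liminf_{\lambda \downarrow 0}|\ln\lambda|^{-1/2}{\mathcal N}_j^+(\lambda) \;\geq\; \frac{\sqrt{2}}{\sqrt{b}\,{\mathcal T}}{\mathcal C}(\Omega_-) > 0,$$
the strict positivity following because the nonempty open set $\Omega_-$ necessarily contains a closed vertical interval of positive length, so ${\mathcal C}(\Omega_-) > 0$.

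The main technical obstacle will be the Birman-Schwinger reduction itself in the absence of a genuine quadratic expansion of $E_j$ at $k^*$: the derivation of Corollary \ref{perf1} relies on such an expansion, and the error estimates referenced as \eqref{per48}, \eqref{per70}, \eqref{per71} in the proof of Theorem \ref{pert3} would need to be revisited with the quadratic remainder replaced by a higher-order vanishing term. The saving grace is that $-\tilde\mu(k-k^*)^2$ is a lower bound for $E_j(k) - {\mathcal E}_j^+$ on $U_{\tilde\mu}$, so only one-sided perturbative estimates are required, which should simplify the adaptation significantly. Once this reduction is in place, the conclusion that the gap $({\mathcal E}_j^+, {\mathcal E}_{j+1}^-)$ contains infinitely many eigenvalues of $H_+$ with accumulation rate to ${\mathcal E}_j^+$ not faster than Gaussian follows immediately.
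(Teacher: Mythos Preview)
Your proposal is correct and follows essentially the same route as the paper. The paper's own argument for Corollary~\ref{perf3} is the brief remark preceding its statement: one goes back to \eqref{per48}--\eqref{per71}, observes that the kernel of $T_1(\lambda)$ involves the exact factor $({\mathcal E}_j^+ - E_j(k) + \lambda)^{-1/2}$, and uses the pointwise inequality ${\mathcal E}_j^+ - E_j(k) \leq \tilde\mu(k-k^*)^2$ near a (possibly degenerate) maximum to minorize this kernel by the quadratic one; the rest of the lower-bound chain in the proof of Theorem~\ref{pert3} then runs unchanged, and the limit in \eqref{per68} is indeed insensitive to the value of $\tilde\mu$ since it enters only as a multiplicative constant inside a logarithm. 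Your write-up makes this explicit, and your identification of the ``technical obstacle'' is slightly overcautious: the Birman--Schwinger step \eqref{aug1} and the localization \eqref{per48} require no information on the order of vanishing of ${\mathcal E}_j^+ - E_j$, so only the one-sided replacement $T_1 \geq T_3^{(\tilde\mu)}$ on a single window ${\mathcal O}_{l,1}(\delta)$ is needed, which is immediate from the inequality you wrote down.
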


\section{Proofs of the Main Results}
\label{section4}
\setcounter{equation}{0}
    \subsection{Proof of Theorem \ref{pert1}}
    \label{ss41}
    The Birman-Schwinger principle entails
    \bel{aug1}
    {\mathcal N}_j^+(\lambda) = n_-(1; V^{1/2} (H_0 - {\mathcal E}_j^+ - \lambda)^{-1} V^{1/2}) + O(1), \quad \lambda \downarrow 0.
    \ee
    Choose $\delta > 0$ so small that the intervals ${\mathcal O}_{l, \alpha}(\delta) : = (l\tau + k_{\alpha, j}^+ - \delta, l\tau + k_{\alpha, j}^+ + \delta)$, $l \in {\mathbb Z}$, $\alpha \in {\mathcal S}_j$, are pairwise disjoint. Set ${\mathcal O}_\delta : = \cup_{(l, \alpha) \in {\mathbb Z} \times {\mathcal S}_j} {\mathcal O}_{l, \alpha}(\delta)$. Introduce the orthogonal projection
    $$
    P_{j,\delta} : = {\mathcal F}^* \int^{\oplus}_{{\mathcal O}_\delta} \pi_j(k) dk \,{\mathcal F}
    $$
    acting in $L^2(\rd)$. Since ${\mathcal E}_j^+$ is not in the spectrum of the operator $H_0$ restricted to $(I-P_{j,\delta}) {\rm Dom}(H_0)$, we find that the operator $V^{1/2} (H_0 - {\mathcal E}_j^+ - \lambda)^{-1} (I-P_{j,\delta}) V^{1/2}$ converges in norm as $\lambda \downarrow 0$ to a compact operator. Therefore, the Weyl inequalities \eqref{per31} easily imply
    $$
    n_+(1 + \varepsilon ; V^{1/2} ({\mathcal E}_j^+ - H_0 + \lambda)^{-1} P_{j,\delta} V^{1/2}) + O(1) \leq
    $$
    $$
    n_-(1; V^{1/2} (H_0 - {\mathcal E}_j^+ - \lambda)^{-1} V^{1/2})  \leq
    $$
    \bel{per48}
    n_+(1 - \varepsilon; V^{1/2} ({\mathcal E}_j^+ - H_0 + \lambda)^{-1} P_{j,\delta} V^{1/2}) + O(1), \quad \lambda \downarrow 0,
    \ee
    with $\varepsilon \in (0,1)$.\\
     For $\lambda > 0$ define $T_1(\lambda) : L^2(\rd) \to L^2({\mathcal O}_\delta)$ as the operator with integral kernel
    \bel{per70}
    (2\pi)^{-1/2}  ({\mathcal E}_j^+ - E_j(k) + \lambda)^{-1/2}  \psi_j(x;k) e^{-iky} V(x,y)^{1/2}, \quad (x,y) \in \rd, \quad k \in {\mathcal O}_\delta.
    \ee
     Then we have
    \bel{per71}
    V^{1/2} ({\mathcal E}_j^+ - H_0 + \lambda)^{-1} P_{j,\delta} V^{1/2} = T_1(\lambda)^* T_1(\lambda),
    \ee
    and hence
     \bel{per49}
     n_+(s^2; V^{1/2} ({\mathcal E}_j^+ - E_j(k) + \lambda)^{-1} P_{j,\delta} V^{1/2}) = n_*(s; T_1(\lambda)) = n_*(s; T_1(\lambda)^*), \quad s>0.
     \ee
     Let  ${\mathcal W} : L^2({\mathcal O}_\delta) \to l^2({\mathbb Z} \times {\mathcal S}_j) \otimes L^2(-\delta, \delta)$ be the unitary operator defined by
    $$
    ({\mathcal W} u)_{l, \alpha}(k) : = u(k + l \tau + k_{\alpha, j}^+), \quad (l, \alpha) \in {\mathbb Z} \times {\mathcal S}_j, \quad k \in (-\delta, \delta),
    $$
    with $u \in L^2({\mathcal O}_\delta)$. Define $T_2(\lambda) : l^2({\mathbb Z} \times {\mathcal S}_j) \otimes L^2(-\delta, \delta) \to L^2(\rd)$, $\lambda > 0$, as the operator with integral kernel
    $$
    (2\pi)^{-1/2} V(x,y)^{1/2}  \psi_j(x-l {\mathcal T}; k+ k_{\alpha, j}^+) e^{i(k + l \tau + k_{\alpha, j}^+)y} ({\mathcal E}_j^+ - E_j(k+ k_{\alpha, j}^+) + \lambda)^{-1/2},
    $$
    where $(l, \alpha) \in {\mathbb Z} \times {\mathcal S}_j$, $k \in (-\delta, \delta)$, $(x,y) \in \rd$.
    By \eqref{per100}, we have $T_2(\lambda) {\mathcal W} = T_1(\lambda)^*$. Therefore,
    \bel{per51}
    n_*(s; T_1(\lambda)^*) = n_*(s; T_2(\lambda)), \quad s>0, \quad \lambda > 0.
    \ee
    Define $T_3(\lambda) : l^2({\mathbb Z} \times {\mathcal S}_j) \otimes L^2(-\delta, \delta) \to L^2(\rd)$, $\lambda > 0$, as the operator with integral kernel
    $$
    (2\pi)^{-1/2} V(x,y)^{1/2}  \psi_j(x-l {\mathcal T};  k_{\alpha, j}^+) e^{i(k + l \tau + k_{\alpha, j}^+)y} (\mu_{\alpha, j}^+ k^2 + \lambda)^{-1/2},
    $$
    with $(l, \alpha) \in {\mathbb Z} \times {\mathcal S}_j$, $k \in (-\delta, \delta)$, $(x,y) \in \rd$. Then
    $$
    \|T_2(\lambda) - T_3(\lambda)\|_2^2 =
    $$
    $$
    (2\pi)^{-1} \sum_{(l,\alpha) \in {\mathbb Z} \times {\mathcal S}_j} \int_{\rd} V(x,y) \int_{-\delta}^{\delta} \left|\psi_j(x-l  {\mathcal T}; k+ k_{\alpha, j}^+)({\mathcal E}_j^+ -E_j(k+ k_{\alpha, j}^+) + \lambda)^{-1/2}  \right.
    $$
    $$
    \left.  -  \psi_j(x-l  {\mathcal T};  k_{\alpha, j}^+) (\mu_{\alpha, j}^+ k^2 + \lambda)^{-1/2}\right|^2 dk \; dx \, dy \leq
    $$
        $$
        \frac{C_1}{\pi} \sum_{\alpha  \in {\mathcal S}_j} \left\{\int_{-\delta}^{\delta} \left|({\mathcal E}_j^+ -E_j(k+ k_{\alpha, j}^+) + \lambda)^{-1/2} - (\mu_{\alpha, j}^+k^2  + \lambda)^{-1/2}\right|^2 dk + \right.
        $$
        \bel{per42a}
        \left. \int_{-\delta}^{\delta} \left(k^{-2} \int_\re
        \left|\psi_j(x; k+ k_{\alpha, j}^+) - \psi_j(x; k_{\alpha, j}^+)\right|^2 dx\right)
        k^2 (\mu_{\alpha, j}^+k^2  + \lambda)^{-1} dk \right\}
        \ee
        where the quantity
        \bel{per47a}
        C_1 : = C_0 \max_{x \in \re} \sum_{l \in {\mathbb Z}} (1 + |x+l  {\mathcal T}|)^{-m_1} \int_\re (1 + |y|)^{-m_2} dy
        \ee
        with $C_0$ being introduced in \eqref{per20}, is finite by $m_1 > 1$ and $m_2 > 1$.
        Since
        $$
        ({\mathcal E}_j^+ -E_j(k+ k_{\alpha, j}^+) + \lambda)^{-1/2} - (\mu_{\alpha, j}^+k^2  + \lambda)^{-1/2} =
        $$
        $$
        \frac{E_j(k+ k_{\alpha, j}^+) - {\mathcal E}_j^+ + \mu_{\alpha, j}^+k^2}{\sqrt{({\mathcal E}_j^+ -E_j(k+ k_{\alpha, j}^+) + \lambda)(\mu_{\alpha, j}^+k^2  + \lambda)}\left(\sqrt{{\mathcal E}_j^+ -E_j(k+ k_{\alpha, j}^+) + \lambda} + \sqrt{\mu_{\alpha, j}^+k^2  + \lambda}\right)},
        $$
        and
        $$
        E_j(k+ k_{\alpha, j}^+) - {\mathcal E}_j^+ + \mu_{\alpha, j}^+k^2 = O(k^3), \quad k \to 0,
        $$
        we find that the first term in the braces at the r.h.s of \eqref{per42a} is uniformly bounded with respect to $\lambda > 0$. Similarly,
        \bel{per43}
        k^2 (\mu_{\alpha, j}^+k^2  + \lambda)^{-1} \leq 1/\mu_{\alpha, j}^+, \quad \lambda >0, \quad k \in \re.
        \ee
        Further, elementary calculations yield
        \bel{per44}
        k^{-2}  \int_\re \left|\psi_j(x; k+ k_{\alpha, j}^+) - \psi_j(x; k_{\alpha, j}^+)\right|^2 dx \leq \int_0^1 \|\pi_j'(ks + k_{\alpha, j}^+)\|^2 ds.
        \ee
        Since the orthogonal projection $\pi_j(k)$ depends analytically on $k$, we find that the combination of \eqref{per43} and \eqref{per44} implies the uniform boundedness with respect to $\lambda > 0$ of the second term in the braces at the r.h.s. of \eqref{per42a}. Therefore \eqref{per42a} yields
        \bel{per45}
        \|T_2(\lambda) - T_3(\lambda)\|_2 = O(1), \quad \lambda \downarrow 0.
        \ee
        Combining \eqref{per32}, \eqref{per33} with $p=2$, and \eqref{per45}, we get
        \bel{per52}
        n_*(s(1+\varepsilon); T_3(\lambda)) + O(1) \leq n_*(s; T_2(\lambda)) \leq n_*(s(1-\varepsilon); T_3(\lambda)) + O(1), \quad \lambda \downarrow 0,
        \ee
        with $s > 0$ and $\varepsilon \in (0,1)$. Finally, define
        $T_4(\lambda) : l^2({\mathbb Z} \times {\mathcal S}_j) \otimes L^2(\re) \to L^2(\rd)$, $\lambda > 0$,
        as the operator with integral kernel
    $$
    (2\pi)^{-1/2} V(x,y)^{1/2}  \psi_j(x-l  {\mathcal T};  k_{\alpha, j}^+) e^{i(k + l \tau + k_{\alpha, j}^+)y} (\mu_{\alpha, j}^+k^2  + \lambda)^{-1/2} \chi_{(-\delta, \delta)}(k),
    $$
    where $(l, \alpha) \in {\mathbb Z} \times {\mathcal S}_j)$, $k \in \re$, $(x,y) \in \rd$, and
    $\chi_{(-\delta, \delta)}$ is the characteristic function of the interval $(-\delta, \delta)$.
    Evidently,
    \bel{per53}
    n_*(s; T_3(\lambda)) = n_*(s; T_4(\lambda)), \quad s > 0, \quad \lambda >0.
    \ee
    At the same time we have
    $$
     \|T_4(\lambda) - {\mathcal G}_1(\lambda)\|_2^2 =
     $$
     $$
     \frac{1}{\pi} \sum_{(l,\alpha) \in {\mathbb Z} \times {\mathcal S}_j} \int_{\rd} V(x,y) \psi_j(x-l  {\mathcal T}; k_{\alpha, j}^+)^2 dx \, dy \int_{\delta}^{\infty}  (\mu_{\alpha, j}^+ k^2 + \lambda)^{-1} dk  \leq \frac{C_1}{\pi \delta} \sum_{\alpha \in {\mathcal S}_j} \frac{1}{\mu_{\alpha, j}^+},
    $$
    the constant $C_1$ being introduced in \eqref{per47a}. Arguing
    as in the derivation of \eqref{per52}, we get
    \bel{per47}
        n_*(s(1+\varepsilon); {\mathcal G}_1(\lambda)) + O(1) \leq n_*(s; T_4(\lambda)) \leq n_*(s(1-\varepsilon); {\mathcal G}_1(\lambda)) + O(1), \quad \lambda \downarrow 0,
        \ee
        with $s > 0$ and $\varepsilon \in (0,1)$. Putting together \eqref{aug1}, \eqref{per48},
         \eqref{per49},  \eqref{per51}, \eqref{per52},
        \eqref{per53}, and \eqref{per47}, we obtain \eqref{per23}.

 \subsection{Proof of Theorem \ref{pert2}}
    \label{ss42}
    We have
    \bel{per54}
    n_*(s; {\mathcal G}_1(\lambda)) = n_+(s^2; {\mathcal G}_1(\lambda) {\mathcal G}_1(\lambda)^*), \quad s>0, \quad \lambda > 0.
    \ee
    The operator $M_1(\lambda) : = {\mathcal G}_1(\lambda) {\mathcal G}_1(\lambda)^* : L^2(\rd) \to L^2(\rd)$ admits the integral kernel
    $$
    \sqrt{V(x,y)V(x',y')} \sum_{(l,\alpha) \in {\mathbb Z} \times {\mathcal S}_j} \frac{e^{-\sqrt{\lambda/\mu_{\alpha, j}^+} |y-y'|}}{2\sqrt{\mu_{\alpha, j}^+ \lambda}}
    e^{i(l\tau + k_{\alpha, j}^+)(y-y')} \psi_j(x-l {\mathcal T}; k_{\alpha, j}^+) \psi_j(x'-l  {\mathcal T}; k_{\alpha, j}^+),
    $$
    with $(x,y), (x',y') \in \rd$. Define $M_2(\lambda) : L^2(\rd) \to L^2(\rd)$ as the operator with integral kernel
    $$
    \sqrt{V(x,y)V(x',y')} \sum_{(l,\alpha) \in {\mathbb Z} \times {\mathcal S}_j} \frac{1}{2\sqrt{\mu_{\alpha, j}^+ \lambda}}
    e^{i(l\tau + k_{\alpha, j}^+)(y-y')} \psi_j(x-l  {\mathcal T}; k_{\alpha, j}^+) \psi_j(x'-l  {\mathcal T}; k_{\alpha, j}^+),
    $$
    with $(x,y), (x',y') \in \rd$. Taking into account \eqref{per20} and the elementary inequalities $0 \leq 1 - e^{-t} \leq t$, $t \geq 0$, we get
    $$
    \|M_1(\lambda) - M_2(\lambda)\|_2^2 \leq
    $$
    $$
     C_0^2 \max_{\alpha \in {\mathcal S}_j} (2\mu_{\alpha, j}^+)^{-2} \times
    $$
    $$
     \int_{\rd}  (1+ |x|)^{-m_1} (1+ |x'|)^{-m_1}
     \left( \sum_{(l,\alpha) \in {\mathbb Z} \times {\mathcal S}_j}
     \left|\psi_j(x-l  {\mathcal T}; k_{\alpha, j}^+) \psi_j(x'-l {\mathcal T}; k_{\alpha, j}^+)\right|\right)^2 dx dx' \times
    $$
    \bel{per54a}
    \int_{\rd} (1+ |y|)^{-m_2} (1+ |y'|)^{-m_2} |y-y'|^2 dy dy'.
    \ee
    Applying the Cauchy-Schwarz inequality, we obtain
    $$
    \int_{\rd}  (1+ |x|)^{-m_1} (1+ |x'|)^{-m_1} \left( \sum_{(l,\alpha) \in {\mathbb Z} \times {\mathcal S}_j}
     \left|\psi_j(x-l  {\mathcal T}; k_{\alpha, j}^+) \psi_j(x'-l {\mathcal T}; k_{\alpha, j}^+)\right|\right)^2 dx dx' \leq
    $$
    \bel{per54b}
    A_j^+ \left(\max_{x \in \re} \sum_{l \in {\mathbb Z}}(1+ |x+l  {\mathcal T}|)^{-m_1}\right)^2 < \infty
    \ee
    since $m_1 > 1$. Similarly,
    \bel{per54c}
    \int_{\rd} (1+ |y|)^{-m_2} (1+ |y'|)^{-m_2} |y-y'|^2 dy dy' < \infty
    \ee
    since $m_2 > 3$. Now, \eqref{per54a} --  \eqref{per54c} imply
    $$
    \|M_1(\lambda) - M_2(\lambda)\|_2 = O(1), \quad \lambda \downarrow 0.
    $$
    Arguing again as in the derivation of \eqref{per52}, we get
    \bel{per55}
    n_+(s(1+\varepsilon); M_2(\lambda)) + O(1) \leq  n_+(s; M_1(\lambda)) \leq
    n_+(s(1-\varepsilon); M_2(\lambda)) + O(1), \lambda \downarrow 0,
    \ee
    with $\varepsilon \in (0,1)$, $s > 0$. Finally,
    $$
    M_2(\lambda) = \frac{1}{2\sqrt{\lambda}} {\mathcal G}_2 {\mathcal G}_2^*, \quad \lambda > 0,
    $$
    and, hence,
    \bel{per56}
    n_+(s^2; M_2(\lambda)) = n_*\left(s\sqrt{2\sqrt{\lambda}}; {\mathcal G}_2\right), \quad s>0, \quad \lambda > 0.
    \ee
    Now the combination of \eqref{per23}, \eqref{per54}, \eqref{per55}, and \eqref{per56}, yields \eqref{per57}.

     \subsection{Proof of Theorem \ref{pert3}}
    \label{ss43}
    In order to prove Theorem \ref{pert3} we need the following
    \begin{lemma} \label{perl1}
    Let $W \in C^1(\re)$ be real-valued periodic function. Then for any bounded interval
    ${\mathcal I} \subset \re$ of positive length, and for any $k_0 \in \re$ we have
    \bel{per57a}
    \lim_{\xi \to \pm \infty} \xi^{-2} \ln{\int_{{\mathcal I}} \psi_j(x-\xi; k_0)^2 dx} = - b.
    \ee
    \end{lemma}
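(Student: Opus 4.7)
The plan is to interpret the statement as a sharp Gaussian decay estimate for the real eigenfunction $\tilde\psi_j(\cdot;k_0)$ of the shifted harmonic oscillator $\tilde h(k_0)=-d^2/dy^2+b^2y^2+W(y+k_0/b)$. Using the identity $\tilde\psi_j(y;k)=\psi_j(y+k/b;k)$ I rewrite $\psi_j(x-\xi;k_0)=\tilde\psi_j(x-\xi-k_0/b;k_0)$; substituting $y=x-\xi-k_0/b$ and noting that $(\xi+k_0/b)^2/\xi^2\to 1$, the claim reduces to
\[ \lim_{\xi\to\pm\infty}\xi^{-2}\ln\int_{\mathcal{I}-\xi}\tilde\psi_j(y;k_0)^2\,dy=-b. \]

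For the upper bound $\limsup\leq -b$ I would apply Agmon's method with the weight $\rho(y):=\int_0^y\sqrt{(b^2s^2+W(s+k_0/b)-E_j(k_0))_+}\,ds$, which satisfies $\rho(y)=by^2/2+O(\ln|y|)$ as $|y|\to\infty$. The standard positive-commutator argument of Agmon / Helffer--Sj\"ostrand gives $e^{(1-\varepsilon)\rho}\tilde\psi_j\in L^2(\re)$ for every $\varepsilon\in(0,1)$, which by local elliptic regularity (Sobolev embedding on unit intervals) upgrades to the pointwise bound $|\tilde\psi_j(y;k_0)|\leq C_\varepsilon e^{-(1-\varepsilon)by^2/2}$ for $|y|$ large. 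Integrating over $\mathcal{I}-\xi$ and letting $\varepsilon\downarrow 0$ gives $\limsup\leq -b$.

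For the lower bound $\liminf\geq -b$ I would set $\tilde\phi(y):=\tilde\psi_j(y;k_0)\,e^{by^2/2}$, which satisfies the linear ODE
\[ \tilde\phi''(y)-2by\,\tilde\phi'(y)+(E_j(k_0)-b-W(y+k_0/b))\tilde\phi(y)=0. \]
The homogeneous principal part $u''-2byu'=0$ admits the independent solutions $1$ and $\int_0^y e^{bs^2}\,ds\sim e^{by^2}/(2by)$. By classical asymptotic ODE theory of Hartman--Wintner / Levinson type (with characteristic exponents $\lambda_\pm(y)=by\pm\sqrt{b^2y^2-c(y)}$ giving $\lambda_+\sim 2by$ and $\lambda_-=O(1/y)$, and the required smoothness/integrability hypotheses met since $W\in C^1$), the full equation admits two independent solutions, one in a polynomial branch $\sim|y|^{\alpha_\pm}$ and one in an exponential branch $\sim e^{by^2}|y|^{\beta_\pm}$ as $y\to\pm\infty$. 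The $L^2$ integrability of $\tilde\psi_j=\tilde\phi e^{-by^2/2}$ forces $\tilde\phi$ to lie in the polynomial branch at both ends, so $\tilde\phi(y)\sim c_\pm|y|^{\alpha_\pm}$ with $c_\pm\neq 0$. Thus for $|\xi|$ large and $y\in\mathcal{I}-\xi$, $y=-\xi+O(1)$, and
\[ \tilde\psi_j(y;k_0)^2=\tilde\phi(y)^2\,e^{-by^2}\geq c\,|\xi|^{2\alpha_\pm}\,e^{-b\xi^2-O(|\xi|)}; \]
integrating, taking logarithms and dividing by $\xi^2$ yields $\liminf\geq -b$.

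The main obstacle is the rigorous justification of this asymptotic dichotomy. A self-contained route is to recast the ODE via variation of parameters from the explicit solutions of the principal part, and to construct the polynomial-branch solution on $[y_0,\infty)$ (respectively $(-\infty,-y_0]$) by a Picard iteration in a sup-norm weighted by a Gaussian factor; the boundedness of $E_j(k_0)-b-W(\cdot+k_0/b)$ together with the weight makes the iteration contractive for $y_0$ large, and the $L^2$ property of $\tilde\psi_j$ then forces the coefficient of the exponentially growing branch to vanish on each half-line. Alternatively one may invoke a packaged Levinson-type theorem from Coddington--Levinson, Eastham, or Olver.
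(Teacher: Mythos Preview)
Your reduction to sharp Gaussian decay of the eigenfunction $\tilde\psi_j(\cdot;k_0)$ of $\tilde h(k_0)=-d^2/dy^2+b^2y^2+W(\cdot+k_0/b)$ is exactly right, and your two-sided strategy (Agmon for the upper bound, ODE asymptotic dichotomy for the lower bound) is correct. A small point: after the substitution the interval is $\mathcal I-\xi-k_0/b$ rather than $\mathcal I-\xi$, but this constant shift is harmless for the limit, as you already note. For the lower bound it is worth recording that $\tilde\psi_j$, being the $j$-th eigenfunction of a Sturm--Liouville operator with simple spectrum, has exactly $j-1$ zeros; hence $\tilde\phi$ has a definite sign near each infinity and the asymptotic $\tilde\phi(y)\sim c_\pm|y|^{\alpha_\pm}$ indeed gives a genuine pointwise lower bound on $|\tilde\psi_j|$ over $\mathcal I-\xi$ for $|\xi|$ large.

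The paper itself does not supply a proof: it observes that \eqref{per57a} follows easily from \cite[Theorem~1.1]{KKP} and omits the details. So your approach is not an alternative to the paper's argument but rather a self-contained substitute for the cited result. What you gain is independence from the reference; what the paper gains is brevity. The mathematics underlying \cite{KKP} for this particular statement is essentially what you outline---sharp pointwise asymptotics of eigenfunctions of bounded perturbations of the harmonic oscillator---so in substance the two routes coincide. Your honest flagging of the Levinson/Hartman--Wintner step as the one requiring care is appropriate; either of the justifications you sketch (Picard iteration with Gaussian-weighted sup norm, or a packaged asymptotic-integration theorem from Coddington--Levinson or Eastham) works here because the perturbation $E_j(k_0)-b-W$ is bounded and $C^1$.
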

    Relation \eqref{per57a} follows easily from \cite[Theorem 1.1]{KKP}, so that we omit the details. \\

    Let $\Omega \subset \rd$ be an open bounded non-empty set.  Define $T_5(\Omega): l^2({\mathbb Z} \times {\mathcal S}_j) \to L^2(\Omega)$ as the operator with integral kernel
    $$
    \left(\mu_{\alpha, j}^+\right)^{-1/4}   \psi_j(x-l  {\mathcal T};  k_{\alpha, j}^+) e^{i(l \tau + k_{\alpha, j}^+)y}, \quad (l, \alpha) \in {\mathbb Z} \times {\mathcal S}_j, \quad (x,y) \in \Omega.
    $$
    Then \eqref{per59} combined with the mini-max principle implies
    \bel{per61}
    n_*(s; C_- T_5(\Omega_-)) \leq n_*(s; {\mathcal G}_2) \leq n_*(s; C_+ T_5(\Omega_+)), \quad s>0.
    \ee
    Let us prove first the upper bound in \eqref{per58}.
    Since the set $\Omega_+$ is bounded, it is contained in some rectangle ${\mathcal R}_+ : = {\mathcal I}_+ \times {\mathcal J}_+ $
    where ${\mathcal I}_+$ and ${\mathcal J}_+$ are bounded intervals of positive lengths. Evidently,
    \bel{per63}
    n_*(s;  T_5(\Omega_+)) \leq  n_*(s;  T_5({\mathcal R}_+)), \quad s>0.
    \ee
    Let $M_3^+ \in S_{\infty}(l^2({\mathbb Z} \times {\mathcal S}_j))$ be the ``diagonal" operator defined by
    $$
    (M_3^+ {\bf u})_{l,\alpha} = \nu_{l,\alpha}^+ u_{l,\alpha}, \quad (l,\alpha) \in  {\mathbb Z} \times {\mathcal S}_j,
    $$
    where ${\bf u} : = \left\{u_{l,\alpha}\right\}_{(l,\alpha) \in {\mathbb Z} \times {\mathcal S}_j} \in l^2({\mathbb Z} \times {\mathcal S}_j)$, and
    $$
    \nu_{l,\alpha}^+ : = |{\mathcal J}_+|
    \sum_{\beta \in {\mathcal S}_j} \left(\mu_{\beta, j}^+\right)^{-1/2}
    \sum_{m \in {\mathbb Z}}(m^2+1)^{-1} (l^2 + 1) \int_{{\mathcal I}_+} \psi_j(x-l {\mathcal T}; k_{\alpha, j}^+)^2 dx, \quad (l,\alpha) \in  {\mathbb Z} \times {\mathcal S}_j.
    $$
    Applying the Cauchy-Schwarz inequality, we find that
    $T_5({\mathcal R}_+)^* T_5({\mathcal R}_+) \leq M_3^+$,
    which combined with the mini-max principle yields
    $$
    n_*(s\sqrt{2{\sqrt{\lambda}}};  T_5({\mathcal R}_+)) \leq n_+(s^2 2\sqrt{\lambda}; M_3^+) =
    $$
     \bel{per62}
    \# \left\{(l,\alpha) \in  {\mathbb Z} \times {\mathcal S}_j \, | \, \nu_{l,\alpha}^+ > s^2 2\sqrt{\lambda}\right\}, \quad s> 0, \quad \lambda > 0.
    \ee
    Applying Lemma \ref{perl1}, we easily find that
    \bel{per64}
    \lim_{\lambda \downarrow 0}\frac{ \# \left\{(l,\alpha) \in  {\mathbb Z} \times {\mathcal S}_j \, | \, \nu_{l,\alpha}^+ > s \sqrt{\lambda}\right\}}{|\ln{\lambda}|^{1/2}} =
    \frac{\sqrt{2}}{\sqrt{b} {\mathcal T}} A_j^+, \quad s>0.
    \ee
    Combining now \eqref{per57} with the upper bound in \eqref{per61}, \eqref{per63}, \eqref{per62}, and \eqref{per64}, we obtain the upper bound in \eqref{per58}.\\

    Finally, we prove  the lower bound in \eqref{per58}. Let ${\mathcal J}_-$ be a closed
    vertical interval of length $q \in (0,\infty)$, contained in $\Omega_-$.
    Due to the invariance of $H_0$ with respect to $y$-translations,
    we may assume without any loss of generality that there exists a bounded interval ${\mathcal I}_-$ of a positive
     length, such that  ${\mathcal I}_- \times (0,q) \subset \Omega_-$. Set
    $$
    L = L(q) : = {\rm Ent}\,\left(\frac{2\pi}{b {\mathcal T} q}\right) = {\rm Ent}\,\left(\frac{2\pi}{\tau q}\right).
    $$
    Then we have ${\mathcal R}_- : = {\mathcal I}_- \times (0,\frac{2\pi}{\tau L}) \subset \Omega_-$, and therefore
    \bel{per65}
    n_*(s;  T_5(\Omega_-)) \geq  n_*(s;  T_5({\mathcal R}_-)), \quad s>0.
    \ee
    Let $M_3^- \in S_{\infty}(l^2({\mathbb Z}))$ be the ``diagonal" operator defined by
    $$
    (M_3^- {\bf u})_{m} = \nu_{m}^- u_{m}, \quad m \in  {\mathbb Z},
    $$
    where ${\bf u} : = \left\{u_m\right\}_{m \in {\mathbb Z}}$, and
    $$
    \nu_{m}^- : = \frac{2\pi}{\tau L\sqrt{\mu_{1,j}^+}}  \int_{{\mathcal I}_-} \psi_j(x-mL {\mathcal T}; k_{1, j}^+)^2 dx, \quad m \in  {\mathbb Z}.
    $$
    Restricting the operator $T_5({\mathcal R}_-)$ onto the subspace
    $$
    \left\{{\bf u} : = \left\{u_{l,\alpha}\right\}_{(l,\alpha) \in {\mathbb Z} \times
    {\mathcal S}_j} \in l^2({\mathbb Z} \times {\mathcal S}_j) \, | \, u_{l,\alpha} = 0 \quad \mbox{if}
    \quad
    l \not \in L{\mathbb Z} \quad \mbox{or} \quad \alpha \neq 1\right\},
    $$
    applying the mini-max principle, and taking into account that
    $$
    \int_0^{\frac{2\pi}{\tau L}} e^{iL(m-m')\tau y} dy =  \frac{2\pi}{\tau L} \delta_{m, m'}, \quad m, m' \in {\mathbb Z},
    $$
    we easily find that
    \bel{per67}
    n_*\left(s\sqrt{2{\sqrt{\lambda}}};  T_5({\mathcal R}_-)\right) \geq n_+ \left(s^2 2\sqrt{\lambda}; M_3^-\right) =
\# \left\{m \in  {\mathbb Z}  \, | \, \nu_m^- > s^2
2\sqrt{\lambda}\right\}
    \ee
    with $s> 0$ and $\lambda > 0$. Utilizing again Lemma \ref{perl1}, we get
    \bel{per68}
    \lim_{\lambda \downarrow 0}\frac{ \# \left\{m \in  {\mathbb Z}  \, | \, \nu_m^- > s \sqrt{\lambda}\right\}}{|\ln{\lambda}|^{1/2}} =
    \frac{\sqrt{2}}{\sqrt{b} {\mathcal T} L(q)} , \quad s>0.
    \ee
    Putting together \eqref{per57}, the lower bound in \eqref{per61}, \eqref{per65}, \eqref{per67}, and
    \eqref{per68}, and optimizing with respect to $q$, we obtain the lower bound in \eqref{per58}.\\

 {\large \bf Acknowledgements}. The authors were partially
supported by the Chilean Science Foundation {\em Fondecyt} under
Grant 1090467, and by {\em N\'ucleo Cient\'ifico ICM} P07-027-F
``{\em Mathematical Theory of Quantum and Classical Magnetic
Systems"}. \\

    {\sc Pablo Miranda}\\
Departamento de Matem\'aticas,
Facultad de Ciencias,\\
Universidad de Chile, Las Palmeras 3425, Santiago de Chile\\
E-mail: pmirandar@ug.uchile.cl\\

{\sc Georgi Raikov}\\
 Departamento de Matem\'aticas, Facultad de
Matem\'aticas,\\ Pontificia Universidad Cat\'olica de Chile,
Vicu\~na Mackenna 4860, Santiago de Chile\\
E-mail: graikov@mat.puc.cl
    \end{document}